\newtheorem{theorem}{Theorem}
\newtheorem{definition}{Definition}
\newtheorem{proposition}{Proposition}
\newtheorem{lemma}{Lemma}
\newtheorem{corollary}{Corollary}
\newtheorem{remark}{Remark}
\newtheorem{assumption}{Assumption}
\def\BibTeX{{\rm B\kern-.05em{\sc i\kern-.025em b}\kern-.08em
		T\kern-.1667em\lower.7ex\hbox{E}\kern-.125emX}}
\begin{document}
\title{Stability Analysis and Intervention Strategies on a Coupled SIS Epidemic Model with \\ Polar Opinion Dynamics}
\author{Qiulin Xu, Tatsuya Masada, and Hideaki Ishii, \IEEEmembership{Fellow, IEEE}
\thanks{This work was supported in the part by JSPS under Grant-in-Aid for Scientific Research Grant No. 22H01508, and in the part by JST SPRING, Japan Grant Number JPMJSP2106 and JPMJSP2180.}
\thanks{$^{1}$Qiulin Xu and Tatsuya Masada are with the Department of Computer Science, Institute of Science Tokyo (formerly, Tokyo Institute of Technology), Yokohama, Japan.
	{\tt\small xu.q.76b5@m.isct.ac.jp, m.tatsuya0210@gmail.com}}%
\thanks{$^{2}$Hideaki Ishii is with the Department of Information Physics and Computing, The University of Tokyo, Tokyo, Japan.
	{\tt\small hideaki\_ishii@ipc.i.u-tokyo.ac.jp}}%
	}

\maketitle

\begin{abstract}
This paper investigates the spread of infectious diseases within a networked community by integrating epidemic transmission and public opinion dynamics. We propose a novel discrete-time networked SIS (Susceptible-Infectious-Susceptible) epidemic model coupled with opinion dynamics that includes stubborn agents with biased views. The model captures the interplay between perceived and actual epidemic severity, offering insights into epidemic dynamics in socially interconnected environments. We introduce the SIS-opinion reproduction number to assess epidemic severity and analyze conditions for disease eradication and the global stability of endemic equilibria. Additionally, we explore opinion-based intervention strategies, providing a framework for policymakers to design effective prevention measures. Numerical examples are provided to illustrate our theoretical findings and the model's practical implications.
\end{abstract}

\begin{IEEEkeywords}
Epidemic spreading, multi-agent system, polar opinion dynamics, susceptible-infected-susceptible.
\end{IEEEkeywords}

\section{Introduction}
Since the discovery of infectious diseases, researchers across fields have contributed to theoretical epidemiology, developing mathematical models to accurately describe disease propagation \cite{pare2020analysis,liu2019analysis,she2022networked,wang2022resilient,wu2023switching,wu2019adaptive,wang2025TAC,Xie2023auto,yu2025auto}. Epidemic models have become crucial over recent decades for predicting epidemic evolution and guiding public health policies \cite{zino2021analysis}. The global COVID-19 outbreak, with its significant societal impact, has led researchers to explore various factors affecting epidemic spread, including environmental influences (e.g., water supply \cite{randazzo2020sars} and public transport \cite{karatas2022transportation}), public awareness, and opinion dynamics \cite{teslya2022effect}. Consequently, developing more complex, accurate models for epidemic dynamics remains a pressing issue.

In modeling disease spreading, most models adopt compartmental structures, segmenting populations by health statuses. A typical example is the SIR model, which classifies the individuals as Susceptible, Infectious, or Recovered. Individuals in the recovered state do not get infected again, leading to the eventual disappearance of the epidemic. This model is well-suited to diseases that confer lifelong immunity, such as chickenpox, where the primary goal is infection peak control \cite{wang2022resilient,di2020covid}. Another common model, the SIS, assumes that individuals can be reinfected after recovery. Due to the continuous mutation of infectious diseases like influenza and COVID-19, their outbreaks are not transient. For these diseases, the SIS model is clearly more appropriate \cite{pare2020analysis,liu2019analysis}.

One effective method for constructing dynamic models that capture the epidemic process of infectious diseases is to use networked models, where each node represents a human community, and edges between nodes represent pathways for diseases spreading between communities. In networked infectious disease models, transmission rates, recovery rates, and network structures are all critical for characterizing the epidemic process. Recently, models have been studied, considering the influence of human awareness \cite{paarporn2017networked} and the impact of opinion interactions on disease threat perception \cite{lin2021discrete}. The study in \cite{lin2021discrete} considers the coupling of a networked SIS model with the fundamental DeGroot opinion dynamics model \cite{degroot1974reaching}, in which each agent updates its opinion by taking a weighted average of others' opinions, leading to a simple consensus in strongly connected network structures. In \cite{she2022networked}, the relationships of cooperation and opposition among opinions are further considered. The works \cite{wang2022resilient} and \cite{wang2021suppressing} examine cases where infected agents exhibit abnormal behaviors in multi-agent consensus problems. However, in reality, the tendency for individual opinion changes are often state-dependent, manifesting as different degrees of stubbornness among holders of different opinions \cite{amelkin2017polar}. For example, those who take the epidemic seriously tend to maintain caution and preventive measures even as the epidemic begins to subside \cite{organisation2022first}. A recent work \cite{yu2024individuals} introduces an analytical framework that simulates the interplay between opinion dynamics and epidemic spread, providing qualitative insights into how opinion polarization influences the transmission of infectious diseases. Inspired by the above literature, this study theoretically studies a discrete-time networked SIS model considering stubborn agents with extreme opinions, based on mass panic theory in social psychology \cite{mawson2017mass}.


The contribution of this work is threefold: First, we construct a new networked SIS model that considers the interaction between agent opinions and infectious diseases, taking into account the presence of stubborn agents with extreme opinions. Second, we define an SIS-opinion reproduction number to quantify epidemic severity. In addition to analyzing conditions for natural disease eradication as commonly done in traditional epidemiological studies, we highlight that this work provides sufficient conditions for the existence and global stability of an endemic equilibrium in more severe outbreaks. Third, we discuss the possibility of controlling the epidemic through opinion intervention, summarizing an algorithm as a reference for policymakers in formulating epidemic prevention strategies under different scenarios.

This paper is organized as follows. Section \ref{Sec:Preliminary} introduces key preliminaries, including graph structures, opinion dynamics, and epidemic models, and proposes a networked SIS epidemic model that includes opinion dynamics considering stubbornness. Section \ref{Sec:Results} analyzes the properties of the proposed model and the behavior of the epidemic spreading process. The suppression of epidemics by influencing the opinions and the practical epidemic countermeasures are also explored in Section \ref{Sec:Results}. Numerical examples in Section \ref{Sec:Simulation} illustrate the analytical results, and the paper concludes in Section \ref{Sec:Conclusion}. Compared to the preliminary version \cite{xu2024analysis}, this paper contains all the proofs of the theoretical results and further discussions.

\textit{Notation:} Let $[n]$ denote the set $\{1,2, \ldots, n\}$ for any positive integer $n$. Denote by $\mathbb{R}^n$ and $\mathbb{R}^{n \times n}$ the $n$-dimensional Euclidean space and the set of $n \times n$ real matrices, respectively. The superscript ``${\top}$'' stands for transposition of a matrix. Denote by $\rho(\cdot)$, $\|\cdot\|$, and $\|\cdot\|_{\infty}$ the spectral radius, Euclidean norm, and infinity norm of a matrix, respectively. Denote by $\boldsymbol{0}_n$ and $\boldsymbol{1}_n$ the all-zero and all-one vectors in $\mathbb{R}^n$, respectively, and $I_{n}$ denotes $n \times n$ identity matrix. For any matrix $M \in \mathbb{R}^{n \times n}$, we denote its $(i,j)$-th entry by $M_{ij}$. For any two vectors $x, y \in \mathbb{R}^n$, we simply write $x > y$ if $x_i > y_i, \forall i \in [n]$.

\section{Preliminaries and Problem Formulation} \label{Sec:Preliminary}
In this section, we first introduce some standard definitions and key concepts from graph theory used in this paper. Then, we consider a group of communities where the infectious disease spreads through a physical network among communities, and the community opinions about the disease evolve on a social network. To this end, we construct a coupled network model to describe the co-evolution of these two processes.

\subsection{Graph Theory}
Consider a system consisting of $n$ agents represented by a directed graph $\mathcal{G}=(\mathcal{V}, \mathcal{E})$. Here, $\mathcal{V}=[n]$ is the set of nodes, and $\mathcal{E} \subset \mathcal{V} \times \mathcal{V}$ is the set of edges. In this study, each node is referred to as an agent, representing a human community. An edge $(j, i) \in \mathcal{E}$ indicates that information can be transmitted from agent $j$ to agent $i$. Particularly, $(i, i) \in \mathcal{E}$ is called a self-loop. In a directed graph, $(j, i) \in \mathcal{E}$ does not necessarily imply $(i, j) \in \mathcal{E}$. When $(j, i) \in \mathcal{E}$, agent $j$ is said to be a neighbor of agent $i$. The set of neighbors is denoted as $\mathcal{N}_i=\{j:(j, i) \in \mathcal{E}\}$ with its cardinality denoted as $d_i=\left|\mathcal{N}_i\right|$.

Let $A=[a_{ij}]_{n \times n}$ be the adjacency matrix of $\mathcal{G}$, where $a_{ij} > 0$ if $(j, i) \in \mathcal{E}$ and $a_{ij} = 0$ otherwise. The graph Laplacian is a matrix representing the properties of the graph structure and is defined as $L=[l_{ij}] \in \mathbb{R}^{n \times n}$, where $l_{i i}=$ $\sum_{j=1, j \neq i}^n a_{i j}$ and $l_{i j}=-a_{i j}$ for $i \neq j$; notice that each row of $L$ sums to zero. 

\subsection{Opinion Dynamics Model}
Opinion dynamics is widely studied as a theory to analyze the process of opinion evolution and formation when agents interact and exchange opinions on a social network \cite{proskurnikov2017tutorial}. Generally, a community does not always maintain its initial opinion. Interaction is constant in that each community evaluates its own opinion and the opinions of its neighbors, continually updating its own view while incorporating its own preferences and characteristics. We consider opinion evolution in a social network of $n$ communities, represented as a directed graph $\mathcal{G}_S=\left(\mathcal{V}, \mathcal{E}_S\right)$, with $\mathcal{N}_i^S$ denoting the set of neighbors of community $i$ in graph $\mathcal{G}_S$. The relative influence of community $j$'s opinion upon community $i$ is measured by the weight $w_{ij} \in [0,1)$. Assume that $\sum_j w_{i j}=1$ for all $i \in[n]$. Then the matrix $W = [w_{ij}]\in \mathbb{R}^{n \times n}$ and the Laplacian matrix $\bar{L} =[\bar{l}_{ij}]= I_n - W$ represent the row-stochastic adjacency matrix and the Laplacian matrix of this social network, respectively.

We use the parameter $z_i(k)$ to represent the opinion of community $i$ regarding the severity of the epidemic at time $k$, and this parameter takes a value in the range $[0, 1]$. When $z_i(k)=1$, it means that community $i$ takes the epidemic highly seriously, whereas $z_i(k)=0$ indicates that community $i$ is extremely complacent about the epidemic. Naturally, an individual's or community's receptiveness to external information and their inclination to change their current stance are closely related to their current opinion. Some opinions are held more stubbornly, while others are more flexible and easily influenced. This phenomenon is known as attitude polarization in social psychology \cite{miller1993attitude}. In this study, we follow the theory of mass panic \cite{mawson2017mass}. According to this theory, opinions that view the epidemic as severe tend to reinforce themselves, while opinions that underestimate the epidemic are more susceptible to panic-induced influence. 

In view of these points, we consider the following polar opinion dynamical model with \textit{stubborn positives} from \cite{amelkin2017polar}:
\begin{equation} \label{original-opinion}
	z_i(k+1) = z_i(k)+\Big(1-z_i(k)\Big) \sum_{j \in \mathcal{N}_i^S} w_{i j}\Big(z_j(k)-z_i(k)\Big).
\end{equation}
Under this model, it is easy to see that when $z_i(k)=1$, agent $i$ will be completely stubborn in its opinion. Conversely, when $z_i(k)=0$, this model degenerates into the classical DeGroot model \cite{degroot1974reaching}, fully accepting the influence of external opinions.

\subsection{Networked SIS Epidemic Model}
The SIS epidemic model is widely known as a basic model for infectious diseases. In the SIS model, the entire population is divided into two states: Susceptible and Infectious. The transition from Susceptible to Infectious is determined by the infection rate $\beta \in [0,1]$, and the transition from Infectious to Susceptible is determined by the recovery rate $\delta \in [0,1]$.


In this study, we elaborate on the networked version of the SIS epidemic model proposed by \cite{pare2020analysis}. Consider the spread of the epidemic within the same $n$ communities involved in the opinion dynamics. However, the virus transmission clearly does not occur via the social network $\mathcal{G}_S$. Instead, we capture the disease spreading on a physical connectivity network $\mathcal{G}_P=\left(\mathcal{V}, \mathcal{E}_P\right)$, also represented by a directed graph, with $\mathcal{N}_i^P$ denoting the set of neighbor nodes of community $i$ in graph $\mathcal{G}_P$. The model is also in the discrete-time domain and is given as follows:
\begin{equation} \label{original-epidemic}
	x_i(k+1)=\left(1-\delta_i\right) x_i(k)+\left(1-x_i(k)\right) \sum_{j \in \mathcal{N}_i^P} \beta_{i j} x_j(k).
\end{equation}
Here, $x_i(k) \in[0,1]$ represents the proportion of the population that is infectious in community $i$ at time $k$. The parameter $\delta_i \in[0,1]$ is the recovery rate, representing the proportion of the infectious individuals in community $i$ who recover at the next time step. Moreover, $\beta_{i j} \in[0,1]$ is called the infection rate, indicating the rate at which the infection spreads from community $j$ to community $i$. Thus, the term $-\delta_i x_i(k)$ represents the recovery rate of agent $i$, and the term $(1-$ $\left.x_i(k)\right) \sum_{j \in N_i} \beta_{i j} x_j(k)$ represents the infection rate influenced by neighboring communities.

\subsection{Coupled SIS-Opinion Dynamical Model}
After introducing the opinion dynamics and SIS epidemic spread model separately, it is natural to consider that, in reality, besides their individual evolutions as described in (\ref{original-opinion}) and (\ref{original-epidemic}), their mutual influence must also be considered. To this end, in this section, we establish a coupled epidemic-opinion model.

First, based on risk perception theory \cite{sjoberg2020explaining}, we note that, beyond the polar opinion formation described in (\ref{original-opinion}), communities' opinions on the epidemic are also influenced by their actual epidemic data. For instance, during the extreme COVID-19 spread in Japan (with Tokyo's daily infections once exceeding 40,000), both government emergency declarations and public awareness of preventive measures peaked (\textit{stubborn positives}) \cite{lu2023daily}. Even as other countries lifted lockdowns and relaxed preventive measures, Japan remained vigilant. However, as the epidemic subsided domestically, the government and public gradually lowered their risk perception and eased protective measures in response to the observed situation \cite{organisation2022first}. 

Taking account of these aspects, we consider the following improved opinion dynamics model:
	\begin{align}
		z_i(k+1) &= \theta_i x_i(k)+\left(1-\theta_i\right)\Big(z_i(k) \nonumber\\
		&\ +\big(1-z_i(k)\big) \sum_{j \in \mathcal{N}_i} w_{i j}\big(z_j(k)-z_i(k)\big)\Big). \label{opinion}
	\end{align}
Here, $\theta_i \in(0,1)$ is a constant, and $w_{i i}$ and $w_{i j}$ are positive weighting coefficients such that $w_{i i}+\sum_{j \in \mathcal{N}_i} w_{i j}=1$. Then, we can transform (\ref{opinion}) into
	\begin{align}
		z_i(k+1)&=z_i(k)+\theta_i\left(x_i(k)-z_i(k)\right) +\left(1-\theta_i\right)\left(1-z_i(k)\right) \nonumber\\
		&\ \times \sum_{j \in \mathcal{N}_i} w_{i j}\left(z_j(k)-z_i(k)\right). \label{transformed-opinion}
	\end{align}
From this equation, we can see that the dynamics of $z_i(k)$ has two properties. First, when the opinion of community $i$ aligns with the current infection state $x_i(k)$, the second term in (\ref{transformed-opinion}) is zero. In this case, the opinion dynamics is only influenced by the third term, which represents polar dynamics and tends to form stubborn positives. Second, when there is a discrepancy between the opinion $z_i(k)$ and the infection state $x_i(k)$, opinion evolution is influenced by both the polar dynamics, which exhibits stubbornness for higher values of $z_i$ (the third term in (\ref{transformed-opinion})) and a tendency to align with the actual infection situation in the community (the second term in (\ref{transformed-opinion})).

Next, the health belief model \cite{green2020health} and the theory of planned behavior \cite{ajzen2020theory} suggest that a community's perception of the threat posed by an infectious disease and their intention to mitigate its consequences directly determine their adoption of protective measures. For instance, when a community perceives the infectious disease as more dangerous, measures such as masking, social distancing, and vaccination are more likely to be actively implemented, enhancing the recovery rate and reducing infection through interactions with other communities. Thus, the actual recovery and infection rates should be functions of the opinion $z_i(k)$. 

To reflect these aspects, we propose a novel SIS epidemic model incorporating the agents' opinions. Specifically, the dynamics of $x_i(k)$ for agent $i$ is rewritten as
\begin{align}
	&x_i(k+1) \nonumber\\
	&\ = x_i(k)-\Big(\delta_{\min }+\left(\delta_i-\delta_{\min }\right) z_i(k)\Big) x_i(k) \nonumber \\
	&\ \ +\Big(1-x_i(k)\Big)\! \sum_{j \in \mathcal{N}_i}\!\Big(\beta_{i j}-\left(\beta_{i j}-\beta_{\min }\right) z_i(k)\Big) x_j(k), \label{epidemic}
\end{align}
where $\delta_{\min }$ and $\beta_{\min }$ represent the minimum recovery rate and the minimum transmission rate, respectively. If $z_i(k)=0$, agent $i$ perceives no threat from the virus and takes no actions to protect themselves, thus being maximally exposed to the infection. Conversely, if $z_i(k)=1$, agent $i$ perceives the virus as extremely severe and minimizes contact with other agents while seeking medical treatments to the maximum extent.

Additionally, we pose the following natural restrictions related to the parameters and graphs throughout this paper:
\begin{assumption} \label{ass:1}
	For each $i \in \mathcal{V}$, it holds $0 < \delta_{\min } \leq \delta_i \leq 1$, and for all $j \in \mathcal{N}_i$, $0 < \beta_{\min } \leq \beta_{i j}$ and $\sum_{j=1}^n \beta_{i j} \leq 1$. Furthermore, the communities' networks $\mathcal{G}_{P}$ and $\mathcal{G}_{S}$ are both strongly connected. 
\end{assumption}

Let $x(k)$ and $z(k)$ be $n$-dimensional vectors whose $i$th components are $x_i(k)$ and $z_i(k)$, respectively. Then, the update equations (\ref{opinion}) and (\ref{epidemic}) can be rewritten in a combined form:
\begin{equation}\label{coupled}
	\begin{bmatrix}
		x(k+1) \\
		z(k+1)
	\end{bmatrix}=\begin{bmatrix}
		\bar{K}(k) & 0_{n \times n} \\
		\Theta & \left(I_n-\Theta\right)(W+Z(k) \bar{L})
	\end{bmatrix} \begin{bmatrix}
		x(k) \\
		z(k)
	\end{bmatrix}, 
\end{equation}
where $\bar{K}(k)=I_n-\left[\Delta_{\min }+\left(\Delta-\Delta_{\min } \right) Z(k)\right]+\left(I_n-X(k)\right)\left[B-Z(k)\left(B-B_{\min}\right)\right]$, $\Delta=\text{diag}\left\{\delta_i\right\}$, $\Delta_{\min}=\delta_{\min} I_n$, $Z(k)=\text{diag}\{z(k)\}$, $X(k)=\operatorname{diag}\{x(k)\}$, $ \Theta=\text{diag}\{\theta_i\}$, $B=\left[\beta_{i j}\right]_{n \times n}$, and $B_{\min}=\beta_{\min} A$.

\subsection{Problem of Interests}
When discussing mathematical models of infectious diseases, the concept of the basic reproduction number $R_0$ is crucial. In epidemiology, the basic reproduction number refers to the average number of secondary infections produced by a single infected individual in a completely susceptible population, in the absence of any interventions. In other words, if $R_0 > 1$, the growth rate of the infected population is positive, and the epidemic will spread. If $R_0< 1$, the growth rate of the infected population is negative, and the infection will eventually disappear without any intervention. 

For the coupled SIS-opinion model (\ref{coupled}), exploring an effective reproduction number is of importance. It must account for the influence of opinions. This motivates us to propose the concept of the \textit{SIS-opinion reproduction number} in this paper and to analyze the stability of system equilibria under different conditions, including both the disease-free healthy equilibrium and the endemic equilibrium where the disease persists. We will also discuss the potential of using social networks for opinion interventions to control the epidemic.

\section{Main Results} \label{Sec:Results}
In this section, we present analysis results on the properties of the proposed SIS-opinion model (\ref{coupled}). Initially, we introduce an important lemma for solving the problem. Further, we analyze various properties of the equilibria of the SIS-opinion model (\ref{coupled}). Based on this, we use the opinion-dependent reproduction number to characterize the behavior of our model.

\subsection{Well-Posedness}
To ensure that our SIS-opinion model (\ref{coupled}) is well-posed, we must verify that its solutions always remain within the state space $[0,1]^n$. The following lemma demonstrates this property.
\begin{lemma} \label{well-posedness}
	For any initial states $x_i(0), z_i(0) \in[0,1]$, $\forall i \in [n]$, (i) it holds $x_i(k), z_i(k) \in[0,1]$ for $k \geq 0$. (ii) If $\exists j \in [n]$ such that $x_j(0) \in (0, 1]$, then it holds that for some $k^{\prime}>0$, $\forall i \in [n]$, $x_i(k)\in(0,1]$ for $k \geq k^{\prime}$.
\end{lemma}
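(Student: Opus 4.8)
The plan is to prove both statements by induction on $k$, exploiting that every update in (\ref{coupled}) is assembled from nonnegative weights and convex combinations, and, for part (ii), the strong connectivity of $\mathcal{G}_P$ guaranteed by Assumption \ref{ass:1}.

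For part (i), I would first rewrite the epidemic update (\ref{epidemic}) using the \emph{effective rates} $\hat{\delta}_i(k):=\delta_{\min}+(\delta_i-\delta_{\min})z_i(k)$ and $\hat{\beta}_{ij}(k):=\beta_{ij}-(\beta_{ij}-\beta_{\min})z_i(k)$, so that $x_i(k+1)=(1-\hat{\delta}_i(k))x_i(k)+(1-x_i(k))\sum_{j\in\mathcal{N}_i}\hat{\beta}_{ij}(k)x_j(k)$. Assuming inductively that $x_i(k),z_i(k)\in[0,1]$, the bounds in Assumption \ref{ass:1} give $\hat{\delta}_i(k)\in[\delta_{\min},\delta_i]\subseteq(0,1]$ and $\hat{\beta}_{ij}(k)\in[\beta_{\min},\beta_{ij}]$ with $\sum_j\hat{\beta}_{ij}(k)\le\sum_j\beta_{ij}\le1$. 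Nonnegativity of $x_i(k+1)$ is then immediate, while the upper bound follows from $\sum_j\hat{\beta}_{ij}(k)x_j(k)\le1$, which yields $x_i(k+1)\le(1-\hat{\delta}_i(k))x_i(k)+(1-x_i(k))=1-\hat{\delta}_i(k)x_i(k)\le1$. For the opinion update I would write (\ref{opinion}) as the convex combination $z_i(k+1)=\theta_i x_i(k)+(1-\theta_i)g_i(k)$, reducing the claim to $g_i(k)\in[0,1]$ for the DeGroot bracket $g_i(k)=z_i(k)+(1-z_i(k))\sum_{j\in\mathcal{N}_i}w_{ij}(z_j(k)-z_i(k))$. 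Using $\sum_{j\in\mathcal{N}_i}w_{ij}(z_j-z_i)=(1-w_{ii})(\bar{z}_i-z_i)$ with $\bar{z}_i$ the neighbor-weighted average of $z_j\in[0,1]$, the estimates $\bar{z}_i-z_i\in[-z_i,1-z_i]$ and $0\le(1-w_{ii})(1-z_i)\le1$ give $0\le g_i(k)\le z_i(k)+(1-z_i(k))^2\le1$, closing the induction.

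For part (ii) the key preliminary is a strict upper bound: the computation above shows that whenever $x_i(k)>0$ one has $x_i(k+1)\le1-\hat{\delta}_i(k)x_i(k)<1$, since $\hat{\delta}_i(k)\ge\delta_{\min}>0$. Hence after a single step every infected community lies in the open interval $(0,1)$. I would then show that positivity propagates along directed edges: if $x_i(k)=0$ but some in-neighbor $j\in\mathcal{N}_i$ has $x_j(k)>0$, then $x_i(k+1)\ge(1-x_i(k))\hat{\beta}_{ij}(k)x_j(k)>0$ because $\hat{\beta}_{ij}(k)\ge\beta_{\min}>0$; and a community already in $(0,1)$ remains positive because either its retention term $(1-\hat{\delta}_i(k))x_i(k)$ is positive (as $\hat{\delta}_i(k)<1$) or it has an infected in-neighbor supplying a positive infection term. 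Thus, once all infected values sit in $(0,1)$, the infected set $P(k)=\{i:x_i(k)>0\}$ is nondecreasing and gains every node possessing an infected in-neighbor; by strong connectivity of $\mathcal{G}_P$ every node is reachable from the initially infected $j$ along a directed path, so an induction along such paths shows $P(k)=[n]$ within finitely many steps, which provides the required $k'$.

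I expect the main obstacle to be the careful treatment of the boundary values $x_i=1$ and $z_i=1$, where the effective recovery rate can degenerate to $\hat{\delta}_i(k)=1$ and a community could in principle fall back to $0$. The way around this is the strict bound $x_i(k)>0\Rightarrow x_i(k+1)<1$ together with the observation, read off from (\ref{opinion}), that $z_i(k+1)=1$ forces both $x_i(k)=1$ and $z_i(k)=1$; hence in the regime where every infected value already lies in $(0,1)$ no community can satisfy $z_i(k)=1$, so $\hat{\delta}_i(k)<1$ and strict positivity is genuinely preserved. Verifying this invariant—that after the initial step the nonzero components of $x$ are confined to $(0,1)$ and no new value exactly equal to $1$ is produced—is the technical heart of part (ii); the spreading argument through the strongly connected graph is then routine.
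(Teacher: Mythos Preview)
Your proposal is correct and follows essentially the same route as the paper: induction via convex-combination bounds for part (i), and propagation of positivity along the strongly connected graph $\mathcal{G}_P$ for part (ii). If anything, your treatment of (ii) is more careful than the paper's one-line appeal to strong connectivity, since you explicitly isolate and address the boundary case $\hat{\delta}_i(k)=1$ that the paper glosses over.
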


\begin{proof}
	We prove by induction. Suppose that for $k$, it holds $x_i(k), z_i(k) \in[0,1]$. Then from (\ref{epidemic}), one obtains
	$$
	\begin{aligned}
		x_i(k&+1) =\ x_i(k)\left[1-\left(\delta_{\min }+\left(\delta_i-\delta_{\min }\right) z_i(k)\right)\right] \\
		&+\left(1-x_i(k)\right) \sum_{j \in \mathcal{N}_i}\left(\beta_{i j}-\left(\beta_{i j}-\beta_{\min }\right) z_i(k)\right) x_j(k).
	\end{aligned}
	$$
	It is evident that $x_i(k+1)$ is a convex combination of $\left[1-\left(\delta_{\min }+\left(\delta_i-\delta_{\min }\right) z_i(k)\right)\right]$ and $\sum_{j \in \mathcal{N}_i}\left(\beta_{i j}-\left(\beta_{i j}-\beta_{\min }\right) z_i(k)\right) x_j(k)$, both within $[0,1]$ for all $i$ by Assumption \ref{ass:1}. Hence, it follows that $x_i(k+1) \in[0,1]$. Furthermore, if there exists $j \in [n]$ such that $x_j(0) > 0$, then by the strong connectivity in Assumption \ref{ass:1}, both terms will be in $(0,1]$ after at most $n-1$ steps. Therefore, for some $k^{\prime} > 0$, we have $x_i(k)\in(0,1]$ for $k \geq k^{\prime}$. 
	
	On the other hand, in (\ref{opinion}), we notice that $z_i(k)+\left(1-z_i(k)\right) \sum_{j \in \mathcal{N}_i} w_{i j}\left(z_j(k)-z_i(k)\right)$ is a convex combination of $1$ and $\sum_{j \in \mathcal{N}_i} w_{i j}\left(z_j(k)-z_i(k)\right)$, and the latter term is in $[0,1]$ for all $i$ by Assumption \ref{ass:1}. Furthermore, $z_i(k+1)$ is also a convex combination of $z_i(k)+\left(1-z_i(k)\right) \sum_{j \in \mathcal{N}_i} w_{i j}\left(z_j(k)-z_i(k)\right)$ and $x_i(k)$. Therefore, we conclude that $z_i(k+1) \in[0,1]$.
\end{proof}	

\subsection{Forms of the Equilibria}
We now introduce some properties of the equilibra of our system. Let $s^*=\left(x^*, z^*\right)$ denote an equilibrium of (\ref{coupled}). We refer to $s^*$ as \textit{healthy} if the epidemic has completely disappeared, i.e., $x^*=\textbf{0}_n$. Otherwise, we call it \textit{endemic}. Additionally, if all communities eventually reach an agreement on their opinions about the epidemic, i.e., $z_i^*=z_j^*$, $\forall i, j \in [n]$, we refer to the equilibrium as \textit{consensus}. 

Consider an equilibrium $s^*=\left(x^*, z^*\right)$ of system (\ref{coupled}). It is easy to see that $\delta_{\min } I_n+\left(\Delta-\delta_{\min } I_n\right) Z^*$ is a positive diagonal matrix. Hence by (\ref{epidemic})
\begin{align}
	x^*=& \ \left[\delta_{\min } I_n+\left(\Delta-\delta_{\min } I_n\right) Z^*\right]^{-1}\left(I_n-X^*\right) \nonumber\\
	&\times \left[B-Z^*\left(B-\beta_{\min } A\right)\right] x^* . \label{x*}
\end{align}
Moreover, we can verify that $I_n-\left(I_n-\Theta\right)\left(W+Z^* \bar{L}\right)$ is invertible. It thus follows from (\ref{opinion}) that
\begin{equation}
	z^*=\left[I_n-\left(I_n-\Theta\right)\left(W+Z^* \bar{L}\right)\right]^{-1} \Theta x^*. \label{z*}
\end{equation}


We observe from (\ref{x*}) and (\ref{z*}) that $s^*=(\mathbf{0}_n, \mathbf{0}_n)$ is a special healthy-consensus equilibrium, representing the ideal state where the epidemic is eradicated on both physical and social levels. This is the key focus of our next section. 

The following proposition shows that, apart from the healthy equilibrium, specific conditions of system parameters must be met for communities to reach a consensus on opinions.

\begin{proposition} \label{prop:persistence}
	For a nonzero equilibrium $s^*=\left(x^*, z^*\right)$ of system (\ref{coupled}), it holds $\mathbf{0}_n<s^*<\mathbf{1}_n$, and the opinions reach a consensus (i.e., $z^*=a \mathbf{1}_n, a \in(0,1]$) only if 
	\begin{align} \label{nonzero consensus}
		&\delta_{\min}+\left(\delta_i-\delta_{\min }\right) a \nonumber\\
		&\quad =(1-a) \sum_{j \in \mathcal{N}_i}\left(\beta_{i j}-\left(\beta_{i j}-\beta_{\min }\right) a\right), \ \forall i \in[n].
	\end{align}
\end{proposition}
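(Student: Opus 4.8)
The plan is to establish the two claims separately. First I would prove the strict positivity and strict upper bound $\mathbf{0}_n < s^* < \mathbf{1}_n$ for any nonzero equilibrium, and then derive the consensus condition \eqref{nonzero consensus} as a necessary condition.

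For the strict bounds, I would exploit the equilibrium forms \eqref{x*} and \eqref{z*} together with Lemma \ref{well-posedness}. Since $s^*$ is nonzero, either $x^* \neq \mathbf{0}_n$ or $z^* \neq \mathbf{0}_n$. I first argue $x^* \neq \mathbf{0}_n$: if $x^* = \mathbf{0}_n$, then \eqref{z*} forces $z^* = \mathbf{0}_n$ since $\Theta x^* = \mathbf{0}_n$, contradicting nonzeroness. So some $x_j^* > 0$. By part (ii) of Lemma \ref{well-posedness}, strong connectivity of $\mathcal{G}_P$ then propagates positivity so that $x^* > \mathbf{0}_n$. For the strict upper bound $x^* < \mathbf{1}_n$, I would examine \eqref{epidemic} at equilibrium: if $x_i^* = 1$ the infection term $(1 - x_i^*)(\cdots)$ vanishes, leaving $x_i^* = x_i^*[1 - (\delta_{\min} + (\delta_i - \delta_{\min})z_i^*)]$, which forces the bracketed recovery factor to be zero; but Assumption \ref{ass:1} gives $\delta_{\min} > 0$, so the factor is strictly positive, a contradiction. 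Hence $x^* < \mathbf{1}_n$. Since $x^* > \mathbf{0}_n$, equation \eqref{z*} with a nonnegative (and, one checks, column/row structure giving positivity) inverse yields $z^* > \mathbf{0}_n$; and $z_i^* < 1$ follows because from \eqref{opinion} $z_i^* = 1$ would require both $x_i^* = 1$ (already excluded) and the convex-combination structure to pin $z_i^*$ at $1$, which I would rule out using $\theta_i \in (0,1)$ and $x_i^* < 1$.

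For the consensus condition, I would substitute $z^* = a\mathbf{1}_n$ with $a \in (0,1]$ into the equilibrium relation \eqref{x*}. With $Z^* = aI_n$, the diagonal recovery matrix becomes $[\delta_{\min} + (\delta_i - \delta_{\min})a]$ on the $i$th diagonal, and the infection-weight matrix becomes $B - a(B - \beta_{\min}A) = (1-a)B + a\beta_{\min}A$, whose $(i,j)$ entry is $\beta_{ij} - (\beta_{ij} - \beta_{\min})a$. Writing out the $i$th component of \eqref{x*} and using $x_i^* \in (0,1)$ to divide through, I would obtain, for each $i$,
\begin{equation*}
	\delta_{\min} + (\delta_i - \delta_{\min})a = \frac{1 - x_i^*}{x_i^*} \sum_{j \in \mathcal{N}_i} \big(\beta_{ij} - (\beta_{ij} - \beta_{\min})a\big) x_j^*.
\end{equation*}
This is not yet \eqref{nonzero consensus}, so the remaining task is to show the infection-state factor collapses to $(1 - a)$. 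I would do this by invoking the opinion equilibrium \eqref{z*}: when $z^* = a\mathbf{1}_n$, examining \eqref{opinion} at equilibrium shows the polar/consensus term $\sum_j w_{ij}(z_j^* - z_i^*)$ vanishes (all opinions equal), leaving $a = \theta_i x_i^* + (1-\theta_i)a$, hence $\theta_i a = \theta_i x_i^*$, giving $x_i^* = a$ for every $i$. Substituting $x_i^* = a$ (so $x_j^* = a$ as well) into the displayed relation turns $\frac{1-x_i^*}{x_i^*} x_j^* = \frac{1-a}{a}\cdot a = 1-a$, yielding exactly \eqref{nonzero consensus}.

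The main obstacle I anticipate is the clean derivation that consensus forces $x_i^* = a$ uniformly; everything else follows mechanically once that identity is in hand. The key insight is that at a consensus opinion equilibrium the averaging term in \eqref{opinion} is identically zero, so the opinion update reduces to a scalar fixed-point balancing $\theta_i$ against the infection state, which pins each $x_i^*$ to the common opinion value $a$. I would need to be careful that this argument is a necessary-condition (``only if'') statement, so I am deriving constraints from an assumed consensus equilibrium rather than constructing one; this keeps the logic aligned with the proposition's phrasing.
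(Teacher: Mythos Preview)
Your proposal is correct and follows essentially the same approach as the paper's proof: a contradiction argument via strong connectivity (you route this through Lemma~\ref{well-posedness}(ii), the paper argues directly from the equilibrium equation, but the content is identical) to get $x^*>\mathbf{0}_n$, direct inspection of the equilibrium relations for the remaining strict bounds, and then---for the consensus part---using \eqref{opinion} at $z^*=a\mathbf{1}_n$ to deduce $x_i^*=a$ before substituting into the epidemic equilibrium to obtain \eqref{nonzero consensus}. The only cosmetic difference is order: the paper establishes $x_i^*=a$ first and then plugs into \eqref{epidemic}, whereas you write out the epidemic relation first and then simplify it with $x_i^*=a$; the logic is the same.
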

\begin{proof}
	We first show that $\mathbf{0}_n<s^*<\mathbf{1}_n$. Suppose that $s^*$ is nonzero but there exists a community $i \in [n]$ with $x_i^*=0$. Substituting $x_i^*=0$ into the epidemic model (\ref{epidemic}), since $\beta_{i j}-\left(\beta_{i j}-\beta_{\min }\right) z_i(k)$ is strictly positive, we have $x_j^*=0$ for all $j \in \mathcal{N}_i$. Since the network is strongly connected, $x^*=\mathbf{0}_n$. Then from (\ref{z*}), one can obtain that $z^* = \mathbf{0}_n$. It thus follows that $s^*=(\mathbf{0}_n, \mathbf{0}_n)$, which is a contradiction. Similar arguments lead us to $x^*<\mathbf{1}_n$ and $\mathbf{0}_n<z^*<\mathbf{1}_n$.
	
	Then to prove the condition (\ref{nonzero consensus}) of nonzero consensus, assume that $z^*=a \mathbf{1}_n$ with $a \in(0,1]$. From (\ref{opinion}), $a=\theta_i x_i^*+\left(1-\theta_i\right) a$, and it thus follows that $x_i^*=a$. From (\ref{epidemic}),
	$$
	\begin{aligned}
		a=& \ a-\left(\delta_{\min }+\left(\delta_i-\delta_{\text {min }}\right) a\right) a \\
		&+(1-a) \sum_{j \in \mathcal{N}_i}\left(\beta_{i j}-\left(\beta_{i j}-\beta_{\min }\right) a\right) a.
	\end{aligned}
	$$
	Since $a>0$, one obtains (\ref{nonzero consensus}).
\end{proof}

Proposition \ref{prop:persistence} implies that when the epidemic cannot be completely eradicated, no community can be free from the disease. Additionally, due to the heterogeneity in infection and recovery rates among communities, it may be difficult for (\ref{nonzero consensus}) to hold. This will imply that the entire society may not reach a consensus on views about the epidemic, especially in more complex social networks.

\subsection{Stability Analysis of Healthy Equilibria}
Next, to analyze the asymptotic behavior of the coupled model (\ref{coupled}) and in particular the stability of its equilibria, we introduce the opinion-dependent reproduction number. In this subsection, we first focus on the most ideal state for the society, namely the healthy equilibrium point $x^*=\mathbf{0}_n$.

For the coupled model (\ref{coupled}), we define the following opinion-dependent effective reproduction number.

\begin{definition} \label{Reproduction Number}
	(SIS-Opinion Reproduction Number) For the coupled epidemic-opinion model in (\ref{coupled}), denote by 
	\begin{equation} \label{Rz}
		R^z(k) = \rho\left(I_n-\Delta(z(k))+B(z(k))\right)
	\end{equation}
	the effective reproduction number at time $k$, where $\Delta(z(k))=\operatorname{diag}\left\{\delta_{\min}+\left(\delta_i-\delta_{\min}\right) z_i(k)\right\}$ and $B(z(k))=\left[\beta_{i j}-\left(\beta_{i j}-\beta_{\min}\right) z_i(k)\right] \in \mathbb{R}^{n \times n}$.
\end{definition}

At any given moment, the effective reproduction number (\ref{Rz}) of the epidemic depends on the opinion state $z(k)$ at that time. It is not difficult to notice that when all communities agree on the severity of the epidemic, i.e., $z(k) = z_{\max} = \textbf{1}_n$, the reproduction number will reach its minimum:
\begin{equation} \label{Rzmin}
	R^z_{\min }=\rho\left(I_n-\Delta\left(z_{\max }\right)+B\left(z_{\max }\right)\right)=\rho\left(I_n-\Delta+B_{\min }\right).
\end{equation}
Conversely, when all communities believe the epidemic is trivial and negligible, i.e., $z(k) = z_{\min} = \textbf{0}_n$, then the reproduction number will be at its maximum:
\begin{equation}\label{Rzmax}
	R^z_{\max }=\rho\left(I_n-\Delta\left(z_{\min }\right)+B\left(z_{\min }\right)\right)=\rho\left(I_n-\Delta_{\min }+B\right).
\end{equation}

Now, using $R^z(k)$ as a measure, we can analyze the spread behavior and equilibrium conditions of epidemics of varying intensities under polar opinion dynamics. We first consider the relatively low-infectivity scenario, where $R^z(k) \leq R^z_{\max } \leq 1$.

\begin{theorem} \label{healthy-consensus equilibrium}
	If $R^z_{\max} \leq 1$, the healthy-consensus equilibrium $s^*\!=\!(\mathbf{0}_n, \mathbf{0}_n)$ is asymptotically stable for all initial conditions.
\end{theorem}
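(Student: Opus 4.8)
The plan is to decouple the analysis into the epidemic $x$-subsystem, which drives everything, and the opinion $z$-subsystem, which is then handled as a stable recursion forced by a vanishing $x$. Writing the epidemic block of (\ref{coupled}) as $x(k+1)=\bar{K}(k)x(k)$ with $\bar{K}(k)=I_n-\Delta(z(k))+(I_n-X(k))B(z(k))$, I would first record two entrywise facts that follow directly from Assumption \ref{ass:1} and Lemma \ref{well-posedness}: (a) $\bar{K}(k)$ is nonnegative, since its diagonal entries $1-(\delta_{\min}+(\delta_i-\delta_{\min})z_i(k))+(1-x_i(k))B(z(k))_{ii}$ are nonnegative because $\delta_{\min}+(\delta_i-\delta_{\min})z_i(k)\le\delta_i\le1$, while its off-diagonal entries are obviously nonnegative; and (b) $\bar{K}(k)\le M:=I_n-\Delta_{\min}+B$ entrywise, because $\Delta(z(k))\ge\Delta_{\min}$, $B(z(k))\le B$, and $I_n-X(k)\le I_n$. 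Since the physical graph is strongly connected, $M$ is an irreducible nonnegative matrix with $\rho(M)=R^z_{\max}\le1$ by (\ref{Rzmax}); let $v>0$ be its left Perron eigenvector, so $v^\top M=R^z_{\max}\,v^\top$.

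The Lyapunov candidate for the epidemic part is $V(x)=v^\top x$. Using $x(k)\ge\mathbf{0}_n$ together with (b) and $v>0$, one gets $V(x(k+1))=v^\top\bar{K}(k)x(k)\le v^\top M x(k)=R^z_{\max}\,V(x(k))\le V(x(k))$, so $V$ is nonincreasing along trajectories. As $v>0$, $V$ is equivalent to $\|\cdot\|_\infty$ on the nonnegative orthant, which already gives Lyapunov stability of $x^*=\mathbf{0}_n$ and, in the strict case $R^z_{\max}<1$, immediate geometric decay $V(x(k))\le(R^z_{\max})^kV(x(0))\to0$, hence $x(k)\to\mathbf{0}_n$.

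The delicate case, which I expect to be the main obstacle, is the boundary $R^z_{\max}=1$, where the comparison only yields boundedness of $V$. Here I would invoke the discrete LaSalle invariance principle on the compact invariant set $[0,1]^{2n}$ (the map (\ref{coupled}) is autonomous). Trajectories converge to the largest invariant subset of $E=\{(x,z):v^\top(M-\bar{K})x=0\}$, where $\bar{K}=\bar{K}(x,z)$. Since $v>0$ and $(M-\bar{K})x\ge\mathbf{0}_n$, membership in $E$ forces $(M-\bar{K})x=\mathbf{0}_n$; writing $M-\bar{K}=(\Delta(z)-\Delta_{\min})+(B-B(z))+X B(z)$ as a sum of three nonnegative terms, each applied to $x$ must vanish, in particular $X B(z)\,x=\mathbf{0}_n$. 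The point is that on any trajectory with $x(0)\ne\mathbf{0}_n$, Lemma \ref{well-posedness}(ii) makes $x(k)>\mathbf{0}_n$ after finitely many steps, so $X(k)$ has positive diagonal and the relation collapses to $B(z(k))x(k)=\mathbf{0}_n$; but $B(z(k))$ has every edge-entry bounded below by $\beta_{\min}>0$ and hence no zero row (strong connectivity), contradicting $x(k)>\mathbf{0}_n$. Thus the largest invariant set in $E$ is contained in $\{x=\mathbf{0}_n\}$, and LaSalle yields $x(k)\to\mathbf{0}_n$ for all initial conditions.

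It remains to propagate this to the opinions. Since $W$ is row-stochastic and $\bar{L}=I_n-W$, the matrix $W+Z(k)\bar{L}$ is nonnegative and row-stochastic, so $(I_n-\Theta)(W+Z(k)\bar{L})$ has infinity norm at most $\max_i(1-\theta_i)=1-\theta_{\min}<1$ uniformly in $k$, where $\theta_{\min}:=\min_i\theta_i$. From $z(k+1)=\Theta x(k)+(I_n-\Theta)(W+Z(k)\bar{L})z(k)$ this gives $\|z(k+1)\|_\infty\le\theta_{\max}\|x(k)\|_\infty+(1-\theta_{\min})\|z(k)\|_\infty$ with $\theta_{\max}:=\max_i\theta_i$, a stable scalar recursion driven by the vanishing input $\|x(k)\|_\infty\to0$. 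A routine estimate then shows $\|z(k)\|_\infty$ stays small for small initial data (Lyapunov stability) and tends to zero (attractivity). Combining the two subsystems establishes that $s^*=(\mathbf{0}_n,\mathbf{0}_n)$ is asymptotically stable for all initial conditions in $[0,1]^{2n}$.
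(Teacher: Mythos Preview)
Your argument is correct. The overall decomposition into the $x$-subsystem followed by an ISS-type bound on the $z$-subsystem matches the paper, and your treatment of the opinion block via the row-stochasticity of $W+Z(k)\bar{L}$ and the uniform contraction factor $1-\theta_{\min}$ is essentially identical to the paper's.

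Where you genuinely diverge is in the epidemic block. The paper does not build a Lyapunov function directly on the coupled system; instead it derives the one-step inequality $x(k+1)\le x(k)-\Delta_{\min}x(k)+(I_n-X(k))Bx(k)$, invokes a monotone comparison with the \emph{standard} networked SIS trajectory $y(k+1)=y(k)-\Delta_{\min}y(k)+(I_n-Y(k))By(k)$ started at $y(0)=x(0)$, and then cites \cite[Theorem~1]{pare2020analysis} to conclude $y(k)\to\mathbf{0}_n$ whenever $\rho(I_n-\Delta_{\min}+B)\le1$. This delegates the delicate $R^z_{\max}=1$ case entirely to that reference. Your route---the left Perron vector Lyapunov function $V(x)=v^\top x$ together with LaSalle on the autonomous map on $[0,1]^{2n}$, using Lemma~\ref{well-posedness}(ii) to rule out nonzero invariant sets in $E$---is more self-contained: it reproves the critical-threshold convergence in situ rather than importing it, and it makes explicit why the quadratic term $X(k)B(z(k))x(k)$ is what kills persistence at the boundary. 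The trade-off is brevity: the paper's proof is two lines for the $x$-part, yours is a full invariance argument.
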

\begin{proof}
	It follows from (\ref{epidemic}) that
	$$
	\begin{aligned}
		x&(k+1)-x(k) \\
		& =-\left[\Delta_{\min }+\left(\Delta-\Delta_{\min }\right) Z(k)\right] x(k) \\
		& \quad \ +\left(I-X(k)\right)\left[B-Z(k)\left(B-\beta_{\min } A\right)\right] x(k) \\
		& =-\Delta_{\min } x(k)+(I-X(k)) B x(k)-\left(\Delta-\Delta_{\min }\right) Z(k) x(k) \\
		& \quad \ -(I-X(k)) Z(k)\left(B-\beta_{\min } A\right) x(k) \\
		& \leq-\Delta_{\min } x(k)+(I-X(k)) B x(k),
	\end{aligned}
	$$
	where the inequality holds since $\Delta \geq \Delta_{\min}$, $0 \leq X(k) \leq I_n$, and $B \geq B_{\min} = \beta_{\min } A$.
	Note that $x(k+1) \geq 0$ by Lemma \ref{well-posedness}. Therefore, $x(k)$ of the original epidemic dynamics (\ref{epidemic}) is upper bounded as $x(k) \leq y(k)$, where $y(k)$ is generated by
	$$
	y(k+1)=y(k)-\Delta_{\min } y(k)+(I-\text{diag}\{y(k)\}) B y(k)
	$$
	with $y(0)=x(0)$. Then, it follows from Theorem 1 in \cite{pare2020analysis} that $y(k)$ converges to $\textbf{0}_n$ for any initial state $y(0) \in [0, 1]^n$ when $R^z_{\max} \leq 1$, and thus $x(k)$ does as well.
	
	Now we consider the dynamics of $z(k)$. By (\ref{opinion}), it holds 
	\begin{equation} \label{zk}
		z(k+1)=\left(I_n-\Theta\right)(W+Z(k) \bar{L}) z(k)+\Theta x(k).
	\end{equation}
	Note that the row sums of matrix $Z(k) \bar{L}$ are all $0$ and $W$ is a row-stochastic matrix. Since $z_i(k) \in [0,1]$, $\forall i \in [n]$, we know $W+Z(k) \bar{L} \geq 0$. Hence, $W+Z(k) \bar{L}$ is also a row-stochastic matrix. By defining $\bar{z}(k+1)=\left(I_n-\Theta\right)(W+\bar{Z}(k) \bar{L}) \bar{z}(k)$ with $\bar{Z}(k)=\text{diag}\{\bar{z}(k)\}$, it follows that 
	$
	\bar{z}_{\max }(k+1) \leq\left(1-\theta_{\min }\right) \bar{z}_{\max }(k),
	$
	where $\bar{z}_{\max }(k)=\max _{i \in[n]} \bar{z}_i(k)$ and $\theta_{\min }=\min _{i \in[n]} \theta_i$. Since $1-\theta_{\min } \in(0,1)$, $\bar{z}_{\max }(k)$ will exponentially converge to $0$ for all initial conditions. Further, it is immediate from the stability of $\bar{z}(k)$ that the original $z(k)$ in (\ref{zk}) is input-to-state stable. Hence,we have that $z(k)$ globally converges to $\textbf{0}_n$ as $x(k)$ asymptotically goes to $\textbf{0}_n$.
\end{proof}	

Theorem \ref{healthy-consensus equilibrium} reveals that a small reproduction number, namely $R^z_{\max} \leq 1$, is a sufficient condition for the global asymptotic stability of the consensus-healthy equilibrium. In other words, when the intensity of a particular epidemic is below a certain threshold, the epidemic will naturally die out without any external interventions. Simultaneously, the opinions of various communities will gradually converge to $0$ as the epidemic vanishes, just as in reality when the epidemic is over, people commonly have no reason to take it seriously anymore \cite{szczuka2021trajectory}. However, such a seemingly natural sociological trait actually implies the risk of a resurgence of the epidemic. The following proposition will theoretically demonstrate this point.

\begin{proposition} \label{healthy-unstable}
	If $R^z_{\max}>1$, the healthy state $x^*=\mathbf{0}_n$ is unstable.
\end{proposition}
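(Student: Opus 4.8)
The plan is to prove instability of the origin by Lyapunov's indirect (first) method: I would linearize the coupled map (\ref{coupled}) at the healthy equilibrium and show that its Jacobian has spectral radius strictly above one whenever $R^z_{\max}>1$. To begin, note that the healthy state is precisely the equilibrium $(\mathbf{0}_n,\mathbf{0}_n)$, since substituting $x^*=\mathbf{0}_n$ into (\ref{z*}) forces $z^*=\mathbf{0}_n$; hence the relevant linearization is taken at the origin of the state space.

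The core computation is the Jacobian $J$ of the right-hand sides of (\ref{epidemic}) and (\ref{opinion}) at the origin. Every term in (\ref{epidemic}) that couples $x$ to $z$, as well as every term quadratic in $x$, carries a factor $x_\ell$ and therefore vanishes at $x=\mathbf{0}_n$; thus the $x$-to-$x$ block linearizes to $I_n-\Delta_{\min}+B$ while the $x$-to-$z$ sensitivity is zero. In (\ref{opinion}) the factor $(1-z_i(k))$ multiplying the polar interaction contributes only at second order in $z$, so the $z$-to-$z$ block linearizes to $(I_n-\Theta)W$, and the $z$-to-$x$ sensitivity is $\Theta$. This gives the block lower-triangular Jacobian $J=\begin{bmatrix} I_n-\Delta_{\min}+B & 0_{n\times n}\\ \Theta & (I_n-\Theta)W\end{bmatrix}$, whose spectrum is the union of the spectra of its two diagonal blocks.

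To finish, observe that the top-left block $I_n-\Delta_{\min}+B$ is nonnegative (as $\delta_{\min}\le 1$ and $B\ge 0$) and, by the strong connectivity of $\mathcal{G}_P$ in Assumption \ref{ass:1}, irreducible; hence by Perron--Frobenius its spectral radius $R^z_{\max}=\rho(I_n-\Delta_{\min}+B)$ is a genuine real positive eigenvalue. Because $J$ is block triangular, $R^z_{\max}$ is an eigenvalue of $J$ as well, so $R^z_{\max}>1$ yields $\rho(J)\ge R^z_{\max}>1$. An eigenvalue strictly outside the unit disk makes the origin unstable for the nonlinear map by the standard discrete-time linearization theorem, which establishes the proposition.

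I expect the only delicate point to be the bookkeeping in the linearization, namely verifying that all cross terms and higher-order terms in (\ref{epidemic}) and (\ref{opinion}) indeed vanish at the origin so that $J$ attains the clean block-triangular form above; once that form is secured, the remaining spectral argument (together with the appeal to Perron--Frobenius and Lyapunov's indirect method) is routine. A minor point worth stating carefully is that $\rho(J)>1$ already follows from $\rho(J)\ge\rho(I_n-\Delta_{\min}+B)=R^z_{\max}$, so the Perron--Frobenius step is strictly needed only to exhibit the unstable mode explicitly as the real eigenvalue $R^z_{\max}$.
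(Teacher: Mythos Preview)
Your proof is correct and mirrors the paper's approach: linearize at $(\mathbf{0}_n,\mathbf{0}_n)$, obtain a block lower-triangular Jacobian whose $(1,1)$-block is $I_n-\Delta_{\min}+B$ with spectral radius $R^z_{\max}>1$, and conclude instability via Lyapunov's indirect method. Your $(2,2)$-block $(I_n-\Theta)W$ actually differs from the paper's $(I_n-\Theta)\widetilde{W}$ (the diagonal part of $W$ only)---your version is the correct linearization of (\ref{opinion}) at the origin---but this is immaterial since the unstable eigenvalue comes entirely from the $(1,1)$-block.
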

\begin{proof}
	Consider the Jacobian matrix $J$ of the coupled system (\ref{coupled}) at $(x, z)$ given by
	$$
	J(x,z)=\begin{bmatrix}
		J_{11}(x,z) & J_{12}(x,z) \\
		J_{21}(x,z) & J_{22}(x,z)
	\end{bmatrix}.
	$$
	From (\ref{opinion}) and (\ref{epidemic}), each entry can be obtained as
	\begin{equation}
		\begin{aligned}
			J_{11}(x,z)= \ & (I_n - X)\left[B - Z(B-B_{\min})\right] \\
			&+ I_n - \left[\Delta_{\min} + (\Delta-\Delta_{\min})Z\right] \\
			&- \text{diag}\left\{(B-Z(B-B_{\min}))x\right\},\\
			J_{12}(x,z) = \ & - \text{diag}\{\left(I_n-X\right)(B-B_{\min})x \\
			&+ (\Delta-\Delta_{\min})x\},\\
			J_{21}(x,z) = \ & \Theta,\\
			J_{22}(x,z) = \ & (I_n-\Theta)\big(\left(I_n-Z\right)\left(WZ-ZW\right) \\
			+ [I_n &- \text{diag}\{\bar{L}z\} - \left(I_n-Z\right)(I_n-\widetilde{W}) ] \big), 
		\end{aligned} \label{Jacobian}
	\end{equation}
	where $\widetilde{W}$ is a diagonal matrix whose entries are the diagonal entries of $W$.
	
	Recall that the healthy-consensus equilibrium $s^* = (x^*, z^*) = (\mathbf{0}_n, \mathbf{0}_n)$ is the unique healthy equilibrium of the coupled system (\ref{opinion}) and (\ref{epidemic}). That is, by (\ref{z*}), when $x^*=\mathbf{0}_n$, $z^*=\mathbf{0}_n$ is the unique solution. Substituting $s^*=(\mathbf{0}_n, \mathbf{0}_n)$ into (\ref{Jacobian}), we can calculate the Jacobian matrix as 
	\begin{equation}
		J(\mathbf{0}_n, \mathbf{0}_n) = \begin{bmatrix}
			I_n-\Delta_{\min}+B & 0 \\
			\Theta & (I_n-\Theta)\widetilde{W}
		\end{bmatrix}.
	\end{equation}
	This matrix is unstable by (\ref{Rzmax}) and $R^z_{\max}>1$. Hence, by Lyapunov's indirect method, our proof is complete.
\end{proof}

The conclusions of Theorem \ref{healthy-consensus equilibrium} and Proposition \ref{healthy-unstable} confirm that $R_{\max}^z \leq 1$ is both necessary and sufficient for the global asymptotic stability of the healthy state. This implies that for more severe epidemics that cannot spontaneously disappear, the only way to end the epidemic is to reduce $R^z_{\max}$ to $1$ or lower through certain means and efforts. According to Definition \ref{Reproduction Number}, policymakers should actually increase public awareness about the epidemic, and more specifically, raising the lower bound $z_{\min}$ of $z$. In practice, this means implementing government prevention policies and promoting protective awareness through media and public health institutions, with the impact reflected in key epidemic parameters like the infection rate $\beta$ and the recovery rate $\delta$.

On the other hand, Proposition \ref{healthy-unstable} points out that a society that has not been affected by the epidemic or has already eradicated it (i.e., reached the equilibrium $s^* = (\mathbf{0}_n, \mathbf{0}_n)$) remains vulnerable when faced with a new severe epidemic with reproduction number $R^z_{\max} > 1$. This explains the phenomenon observed during the COVID-19 pandemic, where multiple waves of outbreaks occurred with the emergence of more infectious viral variants \cite{shrestha2022evolution}. This indicates that after an epidemic ends, policymakers must continuously implement effective public health measures and strengthen public health education, thereby maintaining a high level of public health awareness \cite{talic2021effectiveness}. This implies a higher $z_{\min}$ and a lower reproduction number $R^z_{\max}$ when facing the same epidemic.

\subsection{Existence and Stability Analysis of Endemic Equilibria}
The previous section thoroughly demonstrated the behavior of our epidemic-opinion coupled dynamics model as it converged to the healthy equilibrium. In this section, we aim to analyze the dynamic behavior and equilibria of more severe epidemics. Theorem \ref{healthy-consensus equilibrium} and Proposition \ref{healthy-unstable} have already shown that for a more severe epidemic with $R^z_{\max}>1$, the healthy equilibrium cannot be globally stable, meaning that the epidemic cannot always disappear spontaneously. Therefore, the existence and stability of endemic equilibria become the central focus of our analysis in this section.

First, we present the following proposition regarding the existence of an endemic equilibrium. Let us define $\Xi = [0,1]^{2 n} \backslash\left\{\left(\mathbf{0}_n, z\right) \mid z \in[0,1]^n\right\}$.
\begin{proposition} \label{endemic-existence}
	If $R^z_{\min}>1$, the coupled system (\ref{coupled}) has at least one endemic equilibrium $(x^*, z^*) \in \Xi$.
\end{proposition}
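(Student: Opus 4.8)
The plan is to realize an endemic equilibrium as a fixed point of the one-step map of (\ref{coupled}) restricted to a compact convex region that is bounded away from the healthy state. Write the coupled update as $(x,z)\mapsto G(x,z)$ and let $f(x,z)$ denote its $x$-block, i.e.\ the right-hand side of (\ref{epidemic}). Applying Brouwer's theorem naively on the full cube $[0,1]^{2n}$ would only recover the already-known fixed point $(\mathbf{0}_n,\mathbf{0}_n)$, so the crux is to construct an invariant box that excludes the origin.

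The guiding idea is a worst-case comparison. Since $z(k)\leq\mathbf{1}_n$ at all times by Lemma \ref{well-posedness}, the disease in the coupled model always spreads at least as fast as in the standalone system obtained by freezing every opinion at its maximum $z\equiv\mathbf{1}_n$. Setting $z_i=1$ in (\ref{epidemic}) yields a networked SIS system with recovery rates $\delta_i$ and infection rates $\beta_{\min}a_{ij}$, whose reproduction number is exactly $R^z_{\min}=\rho(I_n-\Delta+B_{\min})$ by (\ref{Rzmin}). First I would invoke the endemic-equilibrium result for networked SIS systems from \cite{pare2020analysis}: since $R^z_{\min}>1$, this worst-case system admits a unique positive equilibrium $\hat{x}\gg\mathbf{0}_n$.

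Next I would establish the two monotonicity properties of $f$ that drive the comparison. For each fixed $z\in[0,1]^n$, the map $f(\cdot,z)$ is a networked SIS update with infection coefficients $\beta_{ij}-(\beta_{ij}-\beta_{\min})z_i\in[\beta_{\min},\beta_{ij}]$ and recovery coefficients in $[\delta_{\min},\delta_i]$, which still obey the sub-stochastic bounds of Assumption \ref{ass:1}; hence $f(\cdot,z)$ is order-preserving on $[0,1]^n$, as in \cite{pare2020analysis}. Moreover, increasing $z_i$ raises the recovery coefficient and lowers the infection coefficients while leaving $f_i$ independent of $z_j$ for $j\neq i$, so $f(x,\cdot)$ is nonincreasing in $z$. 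Combining these, for any $(x,z)$ with $x\geq\hat{x}$ and $z\leq\mathbf{1}_n$ I obtain $f(x,z)\geq f(x,\mathbf{1}_n)\geq f(\hat{x},\mathbf{1}_n)=\hat{x}$, the last equality being the equilibrium identity of the worst-case system. Together with the upper bound $f(x,z)\leq\mathbf{1}_n$ from Lemma \ref{well-posedness}, this shows that the compact convex box $\Omega=\{(x,z): \hat{x}\leq x\leq\mathbf{1}_n,\ \mathbf{0}_n\leq z\leq\mathbf{1}_n\}$ is mapped into itself by $G$.

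Finally, I would apply Brouwer's fixed-point theorem to the continuous self-map $G:\Omega\to\Omega$ to obtain a fixed point $(x^*,z^*)\in\Omega$; since $x^*\geq\hat{x}>\mathbf{0}_n$, this is an equilibrium of (\ref{coupled}) lying in $\Xi$, i.e.\ the claimed endemic equilibrium. I expect the main obstacle to be the monotonicity and comparison step: verifying that $f(\cdot,z)$ is genuinely order-preserving on the whole cube (which hinges on $\sum_j\beta_{ij}\leq1$ and $\delta_i\leq1$ to keep the diagonal sensitivity $\partial f_i/\partial x_i$ nonnegative) and confirming the entrywise domination by the frozen-opinion system. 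By contrast, the existence of $\hat{x}$ is a direct citation and the Brouwer step is routine once invariance of $\Omega$ is in hand.
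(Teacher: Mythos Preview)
Your plan has the right architecture---construct a compact convex box away from the healthy face and apply Brouwer---but the monotonicity step that underpins the invariance of $\Omega=[\hat{x},\mathbf{1}_n]\times[0,1]^n$ does not go through under Assumption~\ref{ass:1}. The diagonal sensitivity of the $x$-block is
\[
\frac{\partial f_i}{\partial x_i}=1-\tilde\delta_i-\sum_{j\in\mathcal{N}_i}\tilde\beta_{ij}x_j,
\qquad \tilde\delta_i\in[\delta_{\min},\delta_i],\ \tilde\beta_{ij}\in[\beta_{\min},\beta_{ij}],
\]
and Assumption~\ref{ass:1} only bounds $\delta_i\le 1$ and $\sum_j\beta_{ij}\le 1$ \emph{separately}; it does not force $\tilde\delta_i+\sum_j\tilde\beta_{ij}x_j\le 1$. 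Hence $f(\cdot,z)$ need not be order-preserving on $[\hat{x},\mathbf{1}_n]$, and your chain $f(x,\mathbf{1}_n)\ge f(\hat{x},\mathbf{1}_n)=\hat{x}$ can break. Concretely, at the corner $x=\mathbf{1}_n$ one has $f_i(\mathbf{1}_n,\mathbf{1}_n)=1-\delta_i$, while the worst-case endemic equilibrium satisfies $\hat{x}_i=\frac{S_i}{\delta_i+S_i}$ with $S_i=\sum_{j}\beta_{\min}a_{ij}\hat{x}_j$; whenever $\delta_i+S_i>1$ you get $1-\delta_i<\hat{x}_i$ and $\Omega$ is not forward invariant. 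A three-node star with $\delta_1=0.9$, $\delta_2=\delta_3=0.1$, and all nonzero $\beta_{ij}=\beta_{\min}=0.5$ already exhibits this: $R^z_{\min}\approx 1.31>1$, $\hat{x}_1\approx 0.43$, yet $f_1(\mathbf{1}_3,\mathbf{1}_3)=0.1<\hat{x}_1$.

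The paper's proof uses the same Brouwer template but chooses the lower face differently: it takes the Perron eigenvector $\mu>0$ of the irreducible Metzler matrix $-\Delta+B_{\min}$, with eigenvalue $\lambda>0$ (equivalent to $R^z_{\min}>1$), and works with $\Xi_\epsilon=\{(x,z):x_i\in[\epsilon\mu_i,1],\ z\in[0,1]^n\}$ for $\epsilon$ small. Near the lower boundary $x_i=\epsilon\mu_i$ the linear part $(-\Delta+B_{\min})x$ dominates the quadratic remainder and yields $x_i(k+1)\ge \epsilon\mu_i$; the upper boundary is handled by Lemma~\ref{well-posedness}. The point is that a \emph{small} lower face sidesteps the failure of global order-preservation, whereas your choice $\hat{x}$ sits too high in the cube. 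If you want to salvage the comparison idea, you would need an additional hypothesis such as $\delta_i+\sum_j\beta_{ij}\le 1$ (a genuine sampling-type condition), or else replace $\hat{x}$ by a sub-equilibrium small enough that the linearization controls the boundary---which is precisely the paper's route.
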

\begin{proof}
	From \cite[Proposition 1]{liu2019analysis}, $\rho\left(I_n-\Delta+B_{\min }\right)>1$ if and only if $s(-\Delta+B_{\min})>0$, where $s(M)$ denotes the largest real part among the eigenvalues of a real square matrix $M$. Let $\lambda \triangleq s\left(-\Delta+B_{\min }\right)$. Since $\Delta$ is diagonal and $B_{\min}$ is nonnegative and irreducible, $-\Delta+B_{\min}$ has an associated right eigenvector $\mu>0$ satisfying 
	\begin{equation}
		(-\Delta+B_{\min }) \mu = \lambda \mu \label{PF}
	\end{equation} 
	from the Perron--Frobenius theorem for irreducible Metzler matrices \cite[Lemma 2.3]{varga2009matrix}.
	
	Consider the convex and compact subset of $\Xi$ given by
	$$
	\Xi_{\epsilon} = \left\{\left(x, z\right) \mid x_i \in [\epsilon\mu_i, 1], z_i \in[0,1]^n, \forall i \in [n]\right\},
	$$
	where $\epsilon \in (0, 1)$. In what follows, we show that with some $\epsilon$, $\Xi_{\epsilon}$ is a positive invariant set for the system (\ref{coupled}). To this end, take $(x(k), z(k)) \in \Xi$. There must exist a sufficiently small $\epsilon_1$ such that $x_i(k) = \epsilon_1 \mu_i$ for some $i$, and $x_j(k) \in [\epsilon_1\mu_j, 1], \forall j \neq i$. Thus, $\left(x(k), z(k)\right) \in \Xi_{\epsilon_1}$. From (\ref{epidemic}) it follows that 
	\begin{align}
		x_i(k&+1) \nonumber\\
		= \ &\epsilon_1 \mu_i-\Big(\delta_{\min }+\left(\delta_i-\delta_{\min }\right) z_i(k)\Big) \epsilon_1 \mu_i \nonumber \\
		& +\Big(1-\epsilon_1 \mu_i\Big) \sum_{j \in \mathcal{N}_i}\Big(\beta_{i j}-\left(\beta_{i j}-\beta_{\min }\right) z_i(k)\Big) x_j(k) \nonumber\\
		\geq \ & \epsilon_1 \mu_i-\Big(\delta_{\min }+\left(\delta_i-\delta_{\min }\right) z_i(k)\Big) \epsilon_1 \mu_i \nonumber \\
		& +\Big(1-\epsilon_1 \mu_i\Big) \sum_{j \in \mathcal{N}_i}\Big(\beta_{i j}-\left(\beta_{i j}-\beta_{\min }\right) z_i(k)\Big) \epsilon_1\mu_j \nonumber\\
		= \ & \epsilon_1 \mu_i -\epsilon_1 \Big(\delta_{\min }+\left(\delta_i-\delta_{\min }\right) z_i(k)\Big)  \mu_i  \nonumber\\
		&+ \epsilon_1 \sum_{j \in \mathcal{N}_i}\Big(\beta_{i j}-\left(\beta_{i j}-\beta_{\min }\right) z_i(k)\Big)  \mu_j \nonumber \\
		&- \epsilon_1^2 \mu_i \sum_{j \in \mathcal{N}_i}\Big(\beta_{i j}-\left(\beta_{i j}-\beta_{\min }\right) z_i(k)\Big) \mu_j \nonumber\\
		\geq \ & \epsilon_1 \mu_i -\epsilon_1 \delta_i \mu_i  + \epsilon_1 \sum_{j \in \mathcal{N}_i}\beta_{\min}  \mu_j \nonumber\\
		&- \epsilon_1^2 \mu_i \sum_{j \in \mathcal{N}_i}\Big(\beta_{i j}-\left(\beta_{i j}-\beta_{\min }\right) z_i(k)\Big) \mu_j. \label{xik+1}
	\end{align}
	From (\ref{PF}), it holds that
	$\lambda \mu_i = -\delta_i \mu_i + \sum_{j \in \mathcal{N}_i} \beta_{\min} \mu_j.
	$
	Hence, (\ref{xik+1}) can be further written as
		\begin{align}
			x_i(k+1) &\geq \epsilon_1 \mu_i + \epsilon_1 \lambda \mu_i \nonumber\\
			&\quad - \epsilon_1^2 \mu_i \!\sum_{j \in \mathcal{N}_i}\!\Big(\beta_{i j}-\left(\beta_{i j}-\beta_{\min }\right) z_i(k)\Big) \mu_j .
		\end{align}
	Note that there must exist a sufficiently small $\epsilon_2$ such that
	$$
	\epsilon_2 \lambda \mu_i - \epsilon_2^2 \mu_i \sum_{j \in \mathcal{N}_i}\Big(\beta_{i j}-\left(\beta_{i j}-\beta_{\min }\right) z_i(k)\Big) \mu_j \geq 0,
	$$
	and then
	$
	x_i(k+1) \geq \epsilon_2 \mu_i.
	$
	Obviously, we obtain that $\forall \bar{\epsilon} \in (0, \min\{\epsilon_1, \epsilon_2\})$, $\Xi_{\bar{\epsilon}}$ is a positive invariant set for system (\ref{coupled}). Thus, by Brouwer's fixed-point theorem \cite{khamsi2011introduction}, system (\ref{coupled}) has at least one equilibrium in $\Xi_{\bar{\epsilon}}$, which completes the proof.
\end{proof}

Then, regarding the global asymptotic stability of the endemic equilibrium, we have the following result.
\begin{theorem} \label{endemic-stability}
	Suppose that $R^z_{\max}>1$ and the system (\ref{coupled}) has an endemic equilibrium $(x^*, z^*)$. Then $x^*$ is asymptotically stable for all disease-nonzero initial conditions, i.e., $x(0) \neq \mathbf{0}_n$, if
	\begin{equation} 
		\sum_{j \in \mathcal{N}_i} \beta_{i j} x_j^* \leq x_i^*, \ \forall i \in [n]. \label{beta x*} 
	\end{equation}
	Moreover, $z^*$ is globally asymptotically stable if 
	\begin{equation} 
		-2 w_{i i}-\frac{\theta_i}{1-\theta_i}<\left[\bar{L} z^*\right]_i<\frac{\theta_i}{1-\theta_i}, \forall i \in [n]. \label{Lz*} 
	\end{equation}
\end{theorem}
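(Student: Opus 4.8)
The plan is to establish the two assertions in sequence, first proving convergence of the infection state, $x(k)\to x^*$, and then bootstrapping this to $z(k)\to z^*$. The order is essential: the opinion error is driven by the infection error through the term $\Theta x(k)$ in (\ref{opinion}), so the second claim will be obtained as an input-to-state stability statement in which $x(k)-x^*$ plays the role of a decaying input.

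For the first claim I would work with the weighted sup-ratio $V^{+}(k)=\max_{i}x_i(k)/x_i^*$ and its lower counterpart $V^{-}(k)=\min_{i}x_i(k)/x_i^*$, which sandwich the trajectory around $x^*$. By Lemma \ref{well-posedness}(ii), any disease-nonzero initial condition enters the interior $(0,1]^n$ after finitely many steps, so these ratios are well defined and we may restrict to $x(k)\in(0,1]^n$. Evaluating (\ref{epidemic}) at the index attaining $V^{+}(k)$ and using the monotone bounds $\beta_{ij}-(\beta_{ij}-\beta_{\min})z_i(k)\le\beta_{ij}$ and $\delta_{\min}+(\delta_i-\delta_{\min})z_i(k)\ge\delta_{\min}$, the infection inflow is controlled by $\sum_{j\in\mathcal N_i}\beta_{ij}x_j^*\le x_i^*$, which is precisely hypothesis (\ref{beta x*}); this is the mechanism that drives $V^{+}$ downward, while symmetric estimates at the minimizing index push $V^{-}$ upward, and the only value jointly compatible with both in the limit is $V^{+}=V^{-}=1$. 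Equivalently, one may observe that in the reordered (``competitive'') orthant where $x$ increases and $z$ decreases the coupled map is order preserving---the cross blocks $J_{12}\le 0$ and $J_{21}=\Theta\ge 0$ from (\ref{Jacobian}) have consistent signs---so the orbit can be squeezed between two monotone comparison orbits converging to the unique interior equilibrium.

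For the second claim I would pass to the opinion error $u(k)=z(k)-z^*$. Using $\sum_{j\in\mathcal N_i}w_{ij}(z_j-z_i)=-[\bar L z]_i$, the update (\ref{opinion}) reads $z_i(k+1)=\theta_i x_i(k)+(1-\theta_i)\big(z_i-(1-z_i)[\bar L z]_i\big)$, and a direct expansion about $z^*$ gives
\begin{align}
u_i(k+1)&=\theta_i\big(x_i(k)-x_i^*\big)+(1-\theta_i)\Big[\big(1+[\bar L z^*]_i-(1-z_i)(1-w_{ii})\big)u_i \nonumber\\
&\quad +(1-z_i)\sum_{j\neq i}w_{ij}u_j\Big]. \nonumber
\end{align}
Taking the infinity norm and bounding $|u_j|\le\|u(k)\|_\infty$, the per-row gain is $(1-\theta_i)\big(\,|1+[\bar L z^*]_i-(1-z_i)(1-w_{ii})|+(1-z_i)(1-w_{ii})\,\big)$; maximizing over $z_i\in[0,1]$ collapses this to $(1-\theta_i)\max\{\,1+[\bar L z^*]_i,\;2(1-w_{ii})-1-[\bar L z^*]_i\,\}$. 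Each branch is strictly below $1$ exactly under the two-sided bound (\ref{Lz*})---the upper bound handles $1+[\bar L z^*]_i$, and the lower bound, which carries the $-2w_{ii}$ term, handles the second branch. Hence $\|u(k+1)\|_\infty\le c\,\|u(k)\|_\infty+\theta_{\max}\|x(k)-x^*\|_\infty$ with $c<1$ and $\theta_{\max}=\max_i\theta_i$, and since $\|x(k)-x^*\|_\infty\to 0$ by the first claim, a standard comparison argument yields $u(k)\to\mathbf 0_n$.

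The main obstacle is the first claim, since the epidemic and opinion equations are genuinely fed back into one another and $x$ cannot be treated in isolation. The naive route---replacing $z(k)$ by its worst case $z=\mathbf 0_n$ to obtain an autonomous SIS upper bound---fails, because that comparison system relaxes to the \emph{maximal} endemic equilibrium rather than to $x^*$. The delicate point is therefore to show that the weighted-max ratio truly contracts to $1$ while the $z$-feedback (which raises caution, hence recovery, precisely when infection is high) acts as a self-correction; condition (\ref{beta x*}), through $\beta_{ij}-(\beta_{ij}-\beta_{\min})z_i(k)\le\beta_{ij}$, is exactly the ingredient that keeps the infection inflow subcritical uniformly over the admissible opinions $z(k)\in[0,1]^n$. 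Once this convergence of $x$ is secured, the opinion part follows cleanly as above.
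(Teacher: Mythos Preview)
Your treatment of the opinion error is correct and essentially the same as the paper's: both expand $u_i(k+1)=z_i(k+1)-z_i^*$, bound the row sums of the resulting matrix, and split into the two sign cases that produce exactly the two-sided inequality (\ref{Lz*}). Your explicit inclusion of the forcing term $\theta_i(x_i(k)-x_i^*)$ with an ISS conclusion is in fact cleaner than the paper's presentation, which suppresses that term.

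The infection part, however, has a real gap. Your weighted sup-ratio route, once carried through with the bounds $\tilde\beta_{ij}(k)\le\beta_{ij}$, $\tilde\delta_i(k)\ge\delta_{\min}$ and condition (\ref{beta x*}), only yields $x_i(k+1)\le V^+(k)\,x_i^*\big[2-\delta_{\min}-x_i(k)\big]$ at the maximizing index, which is \emph{larger} than $V^+(k)\,x_i^*$ whenever $x_i(k)<1-\delta_{\min}$; so $V^+$ need not contract. You flag this yourself (``the naive route \ldots fails'') but never supply a working replacement. The monotone-systems alternative is also incorrect: with $J_{12}\le 0$ and $J_{21}=\Theta\ge 0$ the cross-derivatives have \emph{opposite} signs---higher $x$ raises $z$, which then lowers $x$---so this is a negative feedback loop, and no single orthant reordering makes the coupled map order-preserving; squeezing between monotone comparison orbits is therefore unavailable.

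The paper's argument is structurally different. It sets $y=x-x^*$, derives a linear time-varying form $y(k+1)=\Phi(k)y(k)$, and uses condition (\ref{beta x*}) precisely to ensure $\Phi(k)\ge 0$. It then introduces the majorant $F(k)=\Phi(k)+\operatorname{diag}(x(k))\big[B-Z(k)(B-B_{\min})\big]$, verifies $F(k)\mu=\mu$ for the scaled equilibrium vector $\mu=(x_i^*/x_1^*)_i$ so that $\rho(F(k))=1$ by Perron--Frobenius, and takes the weighted $\ell^1$ Lyapunov function $V(k)=v^\top|y(k)|$ with $v$ the left Perron vector; strict decrease comes from $\Phi(k)<F(k)$ whenever $x(k)>\mathbf 0_n$. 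This Perron-based linear Lyapunov construction---not a max-ratio or monotonicity argument---is the missing ingredient in your plan for the first claim.
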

\begin{proof}
	According to infection process (\ref{opinion}), one has
	\begin{align}
		&\frac{\left(\delta_{\min}+\left(\delta_i-\delta_{\min}\right) z_i(k)\right) x_i^*}{1-x_i^*} \nonumber\\
		&\quad =\sum_{j \in \mathcal{N}_i}\left(\beta_{i j}-\left(\beta_{i j}-\beta_{\min }\right) z_i(k)\right) x_j^*. \label{xstar}
	\end{align}
	Let $y_i(k)=x_i(k)-x_i^*, \forall i \in[n]$. By (\ref{opinion}) and (\ref{xstar}), one has 
	\begin{equation}
		\begin{aligned}
			y_i(k+&1)\\
			= \ & y_i(k)-\left(\delta_{\min }+\left(\delta_i-\delta_{\min }\right) z_i(k)\right) y_i(k) \\
			& -y_i(k) \sum_{j \in \mathcal{N}_i}\left(\beta_{i j}-\left(\beta_{i j}-\beta_{\min }\right) z_i(k)\right) x_j^* \\
			& +\left(1-x_i(k)\right) \sum_{j \in \mathcal{N}_i}\left(\beta_{i j}-\left(\beta_{i j}-\beta_{\min }\right) z_i(k)\right) y_j(k) \\
			= & \bigg(1-\left(\delta_{\min }+\left(\delta_i-\delta_{\min }\right) z_i(k)\right) \\
			& -\sum_{j \in \mathcal{N}_i}\left(\beta_{i j}-\left(\beta_{i j}-\beta_{\min }\right) z_i(k)\right) x_j^*\bigg) y_i(k) \\
			& +\left(1-x_i(k)\right) \sum_{j \in \mathcal{N}_i}\left(\beta_{i j}-\left(\beta_{i j}-\beta_{\min }\right) z_i(k)\right) y_j(k) \\
			= & \bigg(1-\frac{1}{x_i^*} \sum_{j \in \mathcal{N}_i}\left(\beta_{i j}-\left(\beta_{i j}-\beta_{\min }\right) z_i(k)\right) x_j^*\bigg) y_i(k) \\
			& +\left(1-x_i(k)\right) \sum_{j \in \mathcal{N}_i}\left(\beta_{i j}-\left(\beta_{i j}-\beta_{\min }\right) z_i(k)\right) y_j(k).
		\end{aligned}
	\end{equation}
	A compact form of the above equation is given by
	\begin{equation}
		y(k+1)=\Phi(k) y(k), \label{time-varying}
	\end{equation}                      
	where
	$$
	\Phi(k)=I_n-H(k)+\operatorname{diag}(1-x(k))\left(B-Z(k)\left(B-B_{\min }\right)\right),
	$$
	in which
	$$
	\begin{aligned}
		H(k) & =\operatorname{diag}\bigg\{\bigg(\sum_{j \in \mathcal{N}_1}\left(\beta_{1 j}-\left(\beta_{1 j}-\beta_{\min }\right) z_1(k)\right) \frac{x_j^*}{x_1^*}, \\
		&  \qquad \ldots, \sum_{j \in \mathcal{N}_n}\left(\beta_{n j}-\left(\beta_{n j}-\beta_{\min }\right) z_n(k)\right) \frac{x_j^*}{x_n^*}\bigg)\bigg\}.
	\end{aligned}
	$$
	
	We now analyze the stability of the time-varying system (\ref{time-varying}). Let
	$
	F(k)=I-H(k)+\left(B-Z(k)\left(B-B_{\min }\right)\right).
	$
	As $B-Z(k)\left(B-B_{\min }\right)$ is an irreducible nonnegative matrix, it follows that $\Phi(k) \leq$ $F(k)$, where $\Phi(k)=F(k)$ holds if and only if $x(k)=0$.  Moreover, note that $I_n-H(k)$ is nonnegative by (\ref{beta x*}); hence, $\Phi(k)$ and $F(k)$ are irreducible nonnegative matrices. Now, denote
	$$
	\mu=\left[\begin{array}{llll}
		1 & \frac{x_2^*}{x_1^*} & \ldots & \frac{x_n^*}{x_1^*}
	\end{array}\right]^{\top}. 
	$$
	It is not difficult to verify from (\ref{xstar}) that $F(k) \mu=\mu-H \mu+\left(B-Z(k)\left(B-B_{\min }\right)\right) \mu=\mu$. Since $\mu > \mathbf{0}_n$, one obtains that $\mu$ is the Frobenius eigenvector of $F(k)$, corresponding to $\rho(F(k))=1$, by the Perron--Frobenius theorem for irreducible nonnegative matrices \cite[Corollary 8.1.30]{horn2012matrix}. Similarly, there exists a positive left eigenvector $v^{\top}$ such that $v^{\top} F(k)=v^{\top}$.
	
	To proceed, consider the following Lyapunov candidate:
	\begin{equation} \label{Lyapunov}
		V(k)=v^{\top}|y(k)|.
	\end{equation}
	We obtain
	\begin{align} \label{Vk}
		V&(k+1)-V(k) \leq v^{\top}(\Phi(k)-I_n)|y(k)| \nonumber\\
		& =v^{\top}(\Phi(k)-F(k))|y(k)| \nonumber\\
		& =-v^{\top} \operatorname{diag}(x(k))\left(B-Z(k)\left(B-B_{\min }\right)\right)|y(k)| \nonumber\\
		& \leq 0.
	\end{align}
	Note that $v > \mathbf{0}_n$, $B-Z(k)\left(B-B_{\min }\right)$ is irreducible nonnegative, and $x(k) > \mathbf{0}_n$ for $k > 0$ by Lemma \ref{well-posedness}. Hence the equality in (\ref{Vk}) holds if and only if $|y(k)|=0$. Therefore, the system in (\ref{time-varying}) is asymptotically stable for all disease-nonzero initial conditions. This holds regardless of the behavior of $z(k)$.
	
	On the other hand, let us consider the dynamics of $z(k)$. Denote $e_i(k)=z_i(k)-z^*$. By (\ref{opinion}), we have 
	$$
	\begin{aligned}
		e_i(&k+1)= \ z_i(k+1)-z_i^* \\
		= & \left(1-\theta_i\right)\Big(z_i(k)-z_i^* -\left(1-z_i^*\right) \sum_{j \in \mathcal{N}_i} w_{i j}\left(z_j^*-z_i^*\right) \\
		& +\left(1-z_i(k)\right) \sum_{j \in \mathcal{N}_i} w_{i j}\left(z_j(k)-z_i(k)\right)\Big) \\
		= & \left(1-\theta_i\right)\Big(e_i(k) -\left(1-z_i(k)+e_i(k)\right) \sum_{j \in \mathcal{N}_i} w_{i j}\left(z_j^*-z_i^*\right) \\
		& +\left(1-z_i(k)\right) \sum_{j \in \mathcal{N}_i} w_{i j}\left(e_j(k)+z_j^*-e_i(k)-z_i^*\right)\Big) \\
		= & \left(1-\theta_i\right)\Big(e_i(k)+\left(1-z_i(k)\right) \sum_{j \in \mathcal{N}_i} w_{i j}\left(e_j(k)-e_i(k)\right) \\
		& -e_i(k) \sum_{j \in \mathcal{N}_i} w_{i j}\left(z_j^*-z_i^*\right)\Big).
	\end{aligned}
	$$
	This can be written in the following compact form:
	$$
	\begin{aligned}
		e&(k+1) \\
		& =\left(I_n-\Theta\right)\left(e(k)-\left(I_n-Z(k)\right) \bar{L}e(k)+\operatorname{diag}\left(\bar{L} z^*\right) e(k)\right) \\
		& =\left(I_n-\Theta\right)\left(I_n-\left(I_n-Z(k)\right) \bar{L}+\operatorname{diag}\left(\bar{L} z^*\right)\right) e(k) \\
		& =\left(I_n-\Theta\right)\left(W+Z(k) \bar{L}+\operatorname{diag}\left(\bar{L} z^*\right)\right) e(k).
	\end{aligned}
	$$
	Now, let $M(k)=\left(I_n-\Theta\right)\left(W+Z(k) \bar{L}+\operatorname{diag}\left(\bar{L} z^*\right)\right)$. It thus follows that
	$$
	\begin{aligned}
		&\big[\left(I_n-\Theta\right)^{-1} M(k)\big]_{i j} \\
		& \quad = \begin{cases}w_{i i}+z_i(k)\left(1-w_{i i}\right)+\left[\bar{L} z^*\right]_i & (i=j), \\ w_{i j}-z_i(k) w_{i j}=\left(1-z_i(k)\right) w_{i j} & (i \neq j).\end{cases}
	\end{aligned}
	$$ 
	When $\left[\bar{L} z^*\right]_i \geq-w_{i i}-z_i(k)\left(1-w_{i i}\right)$, since $W$ is row-stochastic, one has 
	$$
	\begin{aligned}
		\sum_j&\left|M_{i j}(k)\right|  =\left(1-\theta_i\right)\bigg(\sum_j w_{i j}+\sum_j z_i(k) \bar{l}_{i j}+\left[\bar{L} z^*\right]_i\bigg) \\
		& =\left(1-\theta_i\right)\left(1+\left[\bar{L} z^*\right]_i\right) <\left(1-\theta_i\right)\left(1+\frac{\theta_i}{1-\theta_i}\right)=1,
	\end{aligned}
	$$
	where the inequality holds by (\ref{Lz*}). When $\left[\bar{L} z^*\right]_i<-w_{i i}-z_i(k)\left(1-w_{i i}\right)$, we have
	$$
	\begin{aligned}
		&\sum_j\left|M_{i j}(k)\right| =\left(1-\theta_i\right)\bigg(-w_{i i}-z_i(k)\left(1-w_{i i}\right) \\
		& \qquad \qquad \qquad \ -\left[\bar{L} z^*\right]_i +\sum_{j \in \mathcal{N}_i}\left(1-z_i(k)\right) w_{i j}\bigg) \\
		& \quad =\left(1-\theta_i\right)\left(1-2 w_{i i}-2 z_i(k)\left(1-w_{i i}\right)-\left[\bar{L} z^*\right]_i\right) \\
		& \quad <\left(1-\theta_i\right)\left(1-2 z_i(k)\left(1-w_{i i}\right)+\frac{\theta_i}{1-\theta_i}\right) \\
		& \quad =1-2 z_i(k)\left(1-w_{i i}\right)\left(1-\theta_i\right) \leq 1.
	\end{aligned}
	$$
	One then concludes that $\|M(k)\|_{\infty}<1, \forall k \geq 0$. Therefore, $z^*$ is asymptotically stable for all disease-nonzero initial conditions, which completes the proof.
\end{proof}

This proof draws inspiration from \cite[Theorem 2]{lin2021discrete}, where the Lyapunov function of the form (\ref{Lyapunov}) is proposed. The strength of our theorem lies in its incorporation of more realistic and complex polar opinion dynamics, providing more precise conditions for global stability. While existing research has extensively analyzed the stability of the healthy state in epidemics \cite{she2022networked,pare2020analysis,liu2019analysis}, results regarding endemic equilibrium are relatively scarce. Such works demonstrate the existence and uniqueness of the healthy state, but the existence and specific values of the endemic equilibria remain unknown, making the analysis significantly more challenging. In Theorem \ref{endemic-stability}, we avoid explicitly solving for the endemic equilibrium. Instead, by performing a coordinate transformation on the system states and considering the deviation from an equilibrium as the new state, we simplify the problem to resemble the stability analysis of the healthy equilibrium. This approach leads us to sufficient conditions for global stability, which are relatively easy to meet, as we discussed in the next Remark.

\begin{remark} \label{Rem:1}
	Condition (\ref{beta x*}) implies that, at the equilibrium, the infection level in each community is not lower than the combined influence from neighboring communities. Condition (\ref{Lz*}) suggests that the diversity in opinions among communities should not be excessive. These conditions are typically easy to satisfy in practice. For condition (\ref{beta x*}), although Assumption \ref{ass:1} imposes the constraint $\sum_{j=1}^n \beta_{i j} \leq 1$, actual infection rates $\sum_{j=1}^n \beta_{i j}$ are often much lower. For instance, even for highly infectious diseases like COVID-19, the value is estimated to be $\frac{3}{14}$ in \cite{morris2021optimal}. Condition (\ref{Lz*}) further implies that when polar opinion dynamics in (\ref{transformed-opinion}) has a strong effect (i.e., small $\theta$), opinions across communities in $z^*$ should be more aligned, whereas a larger $\theta$ allows a broader range for $z^*$. In fact, polar opinion dynamics naturally drives the opinions of communities toward consensus \cite{amelkin2017polar}, making condition (\ref{Lz*}) generally satisfied. Simulation results in the next section further confirm these observations.
\end{remark}

Proposition \ref{endemic-existence} and Theorem \ref{endemic-stability} jointly indicate that for a very severe infectious disease with $R_{\min }^z>1$, at least one endemic equilibrium $x^*$ always exists. When $x^*$ satisfies condition (\ref{beta x*}), it is asymptotically stable for any non-zero initial state. In practice, policymakers can use this equilibrium to guide medical resource redistribution. Since the existence of the endemic equilibrium is guaranteed by Brouwer's fixed-point theorem, various algorithms can be used for numerical computation \cite{karamardian2014fixed,chen2005algorithms}.

For moderately infectious diseases, where $R^z_{\min} \leq 1$ but $R^z_{\max}>1$, Theorem \ref{endemic-stability} ensures global convergence, assuming that an endemic equilibrium exists.  While empirical evidence and simulations in the next section support its existence, proving it rigorously is challenging due to the potential for  $R^z(k)$ in (\ref{Rz}) to repeatedly cross the threshold of $1$ as opinions evolve. This prevents the identification of a convex compact set containing only non-zero states. Therefore, it is hard to prove the existence of the endemic equilibrium. Proving the existence of such an equilibrium, or exploring conditions under which the system exhibits periodic oscillations or limit cycles, remains a compelling direction for future research.

\subsection{Discussion on Practical Epidemic Responses}
The previous sections have demonstrated the results of the autonomous co-evolution of the epidemic and opinion dynamics without any additional control strategies. Here, we discuss how policymakers can mitigate the epidemic's impact on the society by exerting external strategies, leveraging the insights gained from the preceding sections.

Recalling our SIS-opinion reproduction number given in Definition \ref{Reproduction Number}, we can show the following interesting property:

\begin{proposition} \label{monotonicty}
	For any $k_1$, $k_2$ $\geq 0$, if $z(k_1) \leq z(k_2)$, then $R^z_{k_1} \geq R^z_{k_2}$.
\end{proposition}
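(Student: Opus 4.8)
The plan is to reduce the statement to the well-known monotonicity of the spectral radius on the cone of nonnegative matrices. Write $M(z) := I_n - \Delta(z) + B(z)$, so that $R^z(k) = \rho(M(z(k)))$ by Definition~\ref{Reproduction Number}, where it is cleanest to read $B(z) = B - Z(B - \beta_{\min} A)$ so that the sparsity pattern of $A$ is preserved. The first step is to record the explicit entries and extract the crucial structural feature: the diagonal $(i,i)$ entry equals $1 - \delta_{\min} - (\delta_i - \delta_{\min})z_i$ (plus an analogous self-loop term if $(i,i)\in\mathcal{E}_P$), and for $j \neq i$ the $(i,j)$ entry equals $\beta_{ij} - (\beta_{ij} - \beta_{\min})z_i$ when $j \in \mathcal{N}_i^P$ and $0$ otherwise. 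The key point is that the entire $i$-th row of $M(z)$ is a function of the single coordinate $z_i$ alone.

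The second step uses Assumption~\ref{ass:1} to establish two properties. First, $M(z) \geq 0$ entrywise for every $z \in [0,1]^n$: the off-diagonal entries on the support of $A$ interpolate between $\beta_{ij}$ and $\beta_{\min}$ and hence lie in $[\beta_{\min}, \beta_{ij}] \subset [0,\infty)$, while the diagonal recovery contribution satisfies $1 - (\delta_{\min} + (\delta_i - \delta_{\min})z_i) \geq 1 - \delta_i \geq 0$ precisely because $\delta_i \leq 1$; moreover $B(z)$ inherits the irreducibility of $A$, so $M(z)$ is irreducible nonnegative. Second, every entry of $M(z)$ is nonincreasing in $z_i$: the partial derivative with respect to $z_i$ is $-(\delta_i - \delta_{\min})$ on the diagonal and $-(\beta_{ij} - \beta_{\min})$ off the diagonal, and both are nonpositive since $\delta_i \geq \delta_{\min}$ and $\beta_{ij} \geq \beta_{\min}$ by Assumption~\ref{ass:1}.

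The third step combines these observations. Because each row of $M(z)$ depends only on the corresponding $z_i$ and is nonincreasing in it, the hypothesis $z(k_1) \leq z(k_2)$ yields, row by row, the entrywise matrix inequality $0 \leq M(z(k_2)) \leq M(z(k_1))$. I would then invoke the monotonicity of the Perron root for nonnegative matrices (e.g. \cite{horn2012matrix}): if $0 \leq P \leq Q$ entrywise, then $\rho(P) \leq \rho(Q)$. Applying this with $P = M(z(k_2))$ and $Q = M(z(k_1))$ gives $R^z_{k_2} = \rho(M(z(k_2))) \leq \rho(M(z(k_1))) = R^z_{k_1}$, which is exactly the claim. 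The argument is short and presents no substantial obstacle; the only points demanding care are verifying the nonnegativity of $M(z)$ (which rests squarely on $\delta_i \leq 1$), correctly identifying that each row depends on a single opinion coordinate, and applying the spectral-radius comparison in the right direction, noting that it is the nonnegativity—rather than the irreducibility—that this comparison requires.
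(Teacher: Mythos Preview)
Your proposal is correct and follows essentially the same approach as the paper: both establish the entrywise inequality $I_n-\Delta(z(k_1))+B(z(k_1)) \geq I_n-\Delta(z(k_2))+B(z(k_2)) \geq 0$ from Assumption~\ref{ass:1} and then invoke the monotonicity of the Perron root for nonnegative matrices. Your write-up is more explicit (computing entries and derivatives row by row), and your closing remark that only nonnegativity---not irreducibility---is required for the spectral-radius comparison is a nice sharpening of the paper's argument.
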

\begin{proof}
	We have $\Delta(z(k))=\operatorname{diag}\left\{\delta_{\text {min }}+\left(\delta_i-\delta_{\text {min }}\right) z_i(k)\right\}$ and $B(z(k))=\left[\beta_{i j}-\left(\beta_{i j}-\beta_{\text {min }}\right) z_i(k)\right]$ from Definition \ref{Reproduction Number}. When $z(k_1) \leq z(k_2)$, it follows that
	$$
	I_n-\Delta(z(k_1))+B(z(k_1)) \geq I_n-\Delta(z(k_2))+B(z(k_2)).
	$$
	Based on Assumption \ref{ass:1}, $I_n-\Delta(z(k))+B(z(k))$ is nonnegative and irreducible. Hence, from \cite[Theorem 2.7]{varga2009matrix}, we have 
	$$
	\rho(I_n-\Delta(z(k_1))+B(z(k_1))) \geq \rho(I_n-\Delta(z(k_2))+B(z(k_2))),
	$$
	which completes the proof.
\end{proof}

Then the following corollary for moderately infectious diseases (i.e., $R^z_{\min} \leq 1$ and $R^z_{\max}>1$) immediately follows from Proposition \ref{monotonicty}.

\begin{corollary} \label{cor:1}
	If $R^z_{\min} \leq 1$ and $R^z_{\max}>1$, there always exists an opinion vector $\tilde{z}$ such that $R^{\tilde{z}}(k) = 1$, where $z(k) \equiv \bar{z}$.
\end{corollary}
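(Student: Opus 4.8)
The plan is to reduce the statement to a one-dimensional intermediate value theorem argument along the consensus ray $z = a\mathbf{1}_n$, $a \in [0,1]$. Define the scalar function
$g(a) = \rho\bigl(I_n - \Delta(a\mathbf{1}_n) + B(a\mathbf{1}_n)\bigr)$, i.e., the SIS-opinion reproduction number from Definition \ref{Reproduction Number} evaluated at the uniform opinion $z = a\mathbf{1}_n$. Substituting $a = 0$ gives $\Delta(\mathbf{0}_n) = \Delta_{\min}$ and $B(\mathbf{0}_n) = B$, so that $g(0) = R^z_{\max}$ by (\ref{Rzmax}); substituting $a = 1$ gives $\Delta(\mathbf{1}_n) = \Delta$ and $B(\mathbf{1}_n) = B_{\min}$, so that $g(1) = R^z_{\min}$ by (\ref{Rzmin}). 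The hypotheses then furnish the bracketing $g(0) = R^z_{\max} > 1$ and $g(1) = R^z_{\min} \leq 1$, which is exactly the sign condition needed to force a crossing of the level $1$.

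Next I would establish that $g$ is continuous on $[0,1]$. The entries of $I_n - \Delta(a\mathbf{1}_n) + B(a\mathbf{1}_n)$ are affine, hence continuous, in $a$, and the spectral radius of a matrix depends continuously on its entries, since the eigenvalues vary continuously as roots of the characteristic polynomial and $\rho(\cdot)$ is the maximum of their moduli. Composing these facts, $g$ is continuous on the compact interval $[0,1]$. Applying the intermediate value theorem to $g$, from $g(0) > 1 \geq g(1)$ there exists $a^* \in (0,1]$ with $g(a^*) = 1$; if $R^z_{\min} = 1$ one simply takes $a^* = 1$. Setting $\tilde{z} = a^*\mathbf{1}_n$ yields a constant (consensus) opinion vector with $R^{\tilde{z}} = 1$, as claimed.

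Proposition \ref{monotonicty} offers a complementary and more structural perspective: along the consensus ray the map $a \mapsto g(a)$ is monotonically nonincreasing, so $g$ sweeps downward from $R^z_{\max}$ to $R^z_{\min}$ and the level set $\{g = 1\}$ is necessarily attained; combined with continuity this also identifies $a^*$ as the threshold opinion level beyond which $R^{a\mathbf{1}_n} \leq 1$, which is the interpretation relevant for intervention design. I do not anticipate a genuine obstacle here. The only technical point is the continuity of the spectral radius, which is standard and requires no irreducibility. The lone subtlety worth flagging is that the corollary asserts only existence of some opinion profile achieving $R = 1$, not uniqueness: the consensus-ray construction is the simplest witness, but any continuous path in $[0,1]^n$ joining $\mathbf{0}_n$ to $\mathbf{1}_n$ would serve equally well under the same argument, and the uniform profile is chosen merely because it connects directly to the closed-form endpoint expressions (\ref{Rzmin}) and (\ref{Rzmax}).
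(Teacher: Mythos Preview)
Your argument is correct and follows essentially the same route as the paper, which simply states that the corollary ``immediately follows from Proposition~\ref{monotonicty}.'' You have supplied the details the paper leaves implicit: the explicit consensus-ray parametrization, the continuity of the spectral radius, and the intermediate value theorem, which together make the existence claim rigorous rather than heuristic.
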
	

Further, by following a proof similar to that of Theorem \ref{healthy-consensus equilibrium}, we obtain the next corollary.

\begin{corollary} \label{cor:2}
	If $R^z_{\min} \leq 1$ and $R^z_{\max}>1$, the healthy state $x^*=\mathbf{0}_n$ is asymptotically stable for all initial conditions if there exists $\bar{k}$ such that $\forall k \geq \bar{k}$, $z(k) \in [\bar{z}, \mathbf{1}_n]$, where $\bar{z}$ is the opinion vector given in Corollary \ref{cor:1}.
\end{corollary}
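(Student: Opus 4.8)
The plan is to mirror the comparison argument in the proof of Theorem~\ref{healthy-consensus equilibrium}, but with the opinion-dependent parameters frozen at the critical vector $\bar z$ from Corollary~\ref{cor:1} rather than at $z_{\min}=\mathbf 0_n$. Fix such a $\bar z$, so that $R^{\bar z}=\rho\bigl(I_n-\Delta(\bar z)+B(\bar z)\bigr)=1$ in the notation of Definition~\ref{Reproduction Number}. Since the conclusion concerns only the asymptotic behavior of $x(k)$, and by Lemma~\ref{well-posedness} the trajectory remains in $[0,1]^{2n}$ for all $k$ (so the finite transient $0\le k<\bar k$ is bounded and irrelevant to the limit), it suffices to analyze the tail $k\ge\bar k$, on which the hypothesis $z(k)\ge\bar z$ holds.

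First I would record the monotonicity of the epidemic parameters in the opinion. Writing \eqref{epidemic} in matrix form as $x(k+1)=\bigl(I_n-\Delta(z(k))\bigr)x(k)+\bigl(I_n-X(k)\bigr)B(z(k))x(k)$, Assumption~\ref{ass:1} gives $\delta_i\ge\delta_{\min}$ and $\beta_{ij}\ge\beta_{\min}$, so $z(k)\ge\bar z$ forces $\Delta(z(k))\ge\Delta(\bar z)$ and $B(z(k))\le B(\bar z)$ entrywise. Because $x(k)\ge\mathbf 0_n$ and $I_n-X(k)\ge 0$, this produces the entrywise bound
\[
x(k+1)\le \bigl(I_n-\Delta(\bar z)\bigr)x(k)+\bigl(I_n-X(k)\bigr)B(\bar z)\,x(k)=:h_{\bar z}\bigl(x(k)\bigr),\qquad k\ge\bar k,
\]
whose right-hand side is exactly the networked SIS update of \cite{pare2020analysis} with the constant recovery matrix $\Delta(\bar z)$ and infection matrix $B(\bar z)$.

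Next I would introduce the comparison trajectory $y(k)$ defined by $y(\bar k)=x(\bar k)$ and $y(k+1)=h_{\bar z}(y(k))$. Exactly as in the proof of Theorem~\ref{healthy-consensus equilibrium}, monotonicity of the frozen-opinion SIS map on $[0,1]^n$ lets me conclude inductively that $\mathbf 0_n\le x(k)\le y(k)$ for all $k\ge\bar k$. The reproduction number of this autonomous comparison system is precisely $R^{\bar z}=1$, so Theorem~1 of \cite{pare2020analysis}---the very result invoked for the boundary case $R^z_{\max}\le 1$ in Theorem~\ref{healthy-consensus equilibrium}---yields $y(k)\to\mathbf 0_n$ for every $y(\bar k)\in[0,1]^n$. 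Squeezing gives $x(k)\to\mathbf 0_n$, and the same comparison bound (with $y$'s healthy state Lyapunov stable) furnishes stability, so $x^*=\mathbf 0_n$ is asymptotically stable for all initial conditions. Note that no claim is made about $z(k)\to\mathbf 0_n$, consistent with $z(k)$ being held at or above the strictly positive level $\bar z$ by the assumed intervention.

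The main obstacle is the critical threshold $R^{\bar z}=1$: unlike the strict case, convergence of the comparison system at the boundary is not a generic exponential-stability statement and must be read off the global result of \cite{pare2020analysis} covering $\rho\le 1$. Fortunately this is the identical boundary case already used in Theorem~\ref{healthy-consensus equilibrium}, so it transfers verbatim once the frozen opinion is taken to be $\bar z$ instead of $\mathbf 0_n$. A secondary point to verify along the way is the monotonicity of $h_{\bar z}$ needed to justify $x(k)\le y(k)$, which holds on $[0,1]^n$ under Assumption~\ref{ass:1} in the same manner as in the cited SIS analysis.
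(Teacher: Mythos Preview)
Your proposal is correct and is exactly the approach the paper intends: the paper states that Corollary~\ref{cor:2} follows ``by following a proof similar to that of Theorem~\ref{healthy-consensus equilibrium},'' and you have carried out precisely that argument with the comparison system frozen at $\bar z$ rather than at $z_{\min}=\mathbf{0}_n$. Your explicit handling of the finite transient before $\bar k$, the boundary case $R^{\bar z}=1$ via \cite{pare2020analysis}, and the observation that only $x^*=\mathbf{0}_n$ (not $z^*$) is claimed are all appropriate refinements of the paper's one-line justification.
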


Corollaries \ref{cor:1} and \ref{cor:2} reveal the possibility of suppressing moderately infectious epidemics through opinion intervention strategies. To exploit this, we consider introducing external control inputs into the opinion dynamics in (7) as follows:
\begin{align}
	\begin{bmatrix}
		x(k+1) \\
		z(k+1)
	\end{bmatrix}=& \ \begin{bmatrix}
		\bar{K}(k) & 0_{n \times n} \\
		\Theta & \left(I_n-\Theta\right)(W+Z(k) \bar{L})
	\end{bmatrix} \begin{bmatrix}
		x(k) \\
		z(k)
	\end{bmatrix} \nonumber\\
	&+ \begin{bmatrix}
		0 & C
	\end{bmatrix} \begin{bmatrix}
		0 \\
		u
	\end{bmatrix}, 
\end{align}
where $C \in \mathbb{R}_{+}^{n \times m}$ is a nonnegative matrix, and $u \in \mathbb{R}_{+}^m$ is a constant input vector. We assume that under this control strategy, as $k \rightarrow \infty$, the infimum of the public opinion $z(k)$ is $\underline{z}^u$. It is easy to observe that $\underline{z}^u > C u \geq 0$.

For such a control problem, our focus is whether policymakers, given a fixed budget denoted by $Q$, can choose an optimal opinion intervention strategy for ensuring that $\underline{z}^u$ satisfies the conditions in Corollary \ref{cor:2}, thereby controlling the epidemic. Consider the following optimization problem:
\begin{equation}
	\max _{\sum_i u_i \leq Q} (\underline{z}^u)^{\top} \underline{z}^u.
\end{equation}
The optimal solution is denoted by $u^*(Q)$, with the corresponding maximum infimum opinion vector being $\underline{z}^u(Q)$.

In opinion dynamics, finding the optimal intervention strategy $u^*(Q)$ is an important area of research. Depending on different practical situations, the budget $Q$ can be allocated with specific weights across all or some nodes in the social network to induce an overall opinion shift \cite{damonte2022targeting}, or it can be used to place a few absolutely stubborn nodes, whose opinions do not change, at specific locations in the network to influence the entire system's opinion \cite{hunter2022optimizing}. It is not difficult to see that if $\underline{z}^u(Q) \geq \bar{z}$, the epidemic can be suppressed through opinion interventions. Otherwise, administrative measures such as lockdowns and mask mandates will be necessary \cite{cowling2020public}.

Finally in Algorithm \ref{alg:1}, we summarize the epidemic countermeasures. Algorithm \ref{alg:1} provides three response measures for policymakers depending on different infection levels: high, moderate, and mild. In the case of highly infectious diseases, opinion intervention strategies are ineffective in suppressing the epidemic. Administrative measures will be necessary. However, administrators can predict the infection proportions in each community in advance, allocate medical resources, and optimize response systems to minimize the epidemic's impact. For moderately infectious diseases, suppressing the epidemic through opinion interventions is a promising approach. Compared to administrative measures like lockdowns, it has a much smaller impact on the normal functioning of the society, provided that administrators have sufficient control over the opinion dynamics in the social network. Lastly, for mildly infectious diseases, Theorem \ref{healthy-consensus equilibrium} guarantees that the epidemic will naturally die out, so no special countermeasures are required.

\begin{algorithm}[!h]
	\caption{The algorithm for generating epidemic responses}
	\begin{algorithmic}[l]
		\REQUIRE Graphs $\mathcal{G}_S$ and $\mathcal{G}_P$, matrices $W$, $B$, $\Delta$, $\Theta$, and $Q$
		\ENSURE Epidemic responses
		\STATE Compute $R^z_{\min }$ and $R^z_{\max }$ by solving (\ref{Rzmin}) and (\ref{Rzmax}).
		\IF {$R^z_{\min } > 1$}
		\STATE Compute the globally stable endemic equilibrium $x^*$ guaranteed by Proposition \ref{endemic-existence} and Theorem \ref{endemic-stability}, using an algorithm for Brouwer fixed points \cite{karamardian2014fixed,chen2005algorithms}.
		\STATE Generate a medical resource redistribution strategy $\Psi$ based on $x^*$.
		\STATE Generate administrative measures set $\Omega$ (including lockdowns, mask mandates, and emergency declarations).
		\RETURN $\Psi$, $\Omega$
		\ELSIF {$R^z_{\min } \leq 1$ and $R^z_{\max } > 1$}
		\STATE Determine the opinion intervention strategy $u^*(Q)$ and the corresponding $\underline{z}^u(Q)$ via researches such as \cite{damonte2022targeting,hunter2022optimizing}. Then, set
		\STATE $R^{\underline{z}^u(Q)}_{\max } = \rho\left(I_n-\Delta\left(\underline{z}^u(Q)\right)+B\left(\underline{z}^u(Q)\right)\right)$.
		\IF {$R^{\underline{z}^u(Q)}_{\max } \leq 1$}
		\RETURN $u^*(Q)$
		\ELSE 
		\STATE Generate administrative measures set $\Omega$.
		\RETURN $\Omega$
		\ENDIF
		\ELSE
		\RETURN \textit{null}
		\ENDIF
	\end{algorithmic} \label{alg:1}
\end{algorithm}

\section{Simulations} \label{Sec:Simulation}
In this section, we simulate the epidemic spread over a large-scale real-world network based on 46 prefectures in Japan (excluding Kumamoto due to missing statistical data) and illustrate our main results. 

We consider an epidemic process spreading across Japan ($n=46$). Epidemic spread and public opinion evolution propagate through physical and social networks with different graph structures. Fig.~\ref{fig2a} shows our physical network, where the edges between prefectures represent population movement. The weights are determined based on the national migration survey\cite{national2019}. To simplify the network, many low-weight edges (representing less popular routes, such as Ehime-Iwate) were removed, as they are several orders of magnitude less important than busy routes (such as Tokyo-Kanagawa). The processed base adjacency matrix $\bar{B}$ remains irreducible. The base recovery rate matrix $\bar{\Delta}$ is constructed based on \cite{ministry2022}, reflecting regional medical resource distribution. Both $\bar{B}$ and $\bar{\Delta}$ are normalized to satisfy Assumption \ref{ass:1}. Fig.~\ref{fig2b} shows our social network, where the edges between prefectures represent public opinion exchange. To simulate the diverse social networks of the Internet age, we use the Watts-Strogatz small-world model \cite{watts1998collective} to generate a graph structure that satisfies Assumption \ref{ass:1}.

\begin{figure}[t!] 
	\centering
	\subfloat[\label{fig2a}]{
		\includegraphics[scale=0.50]{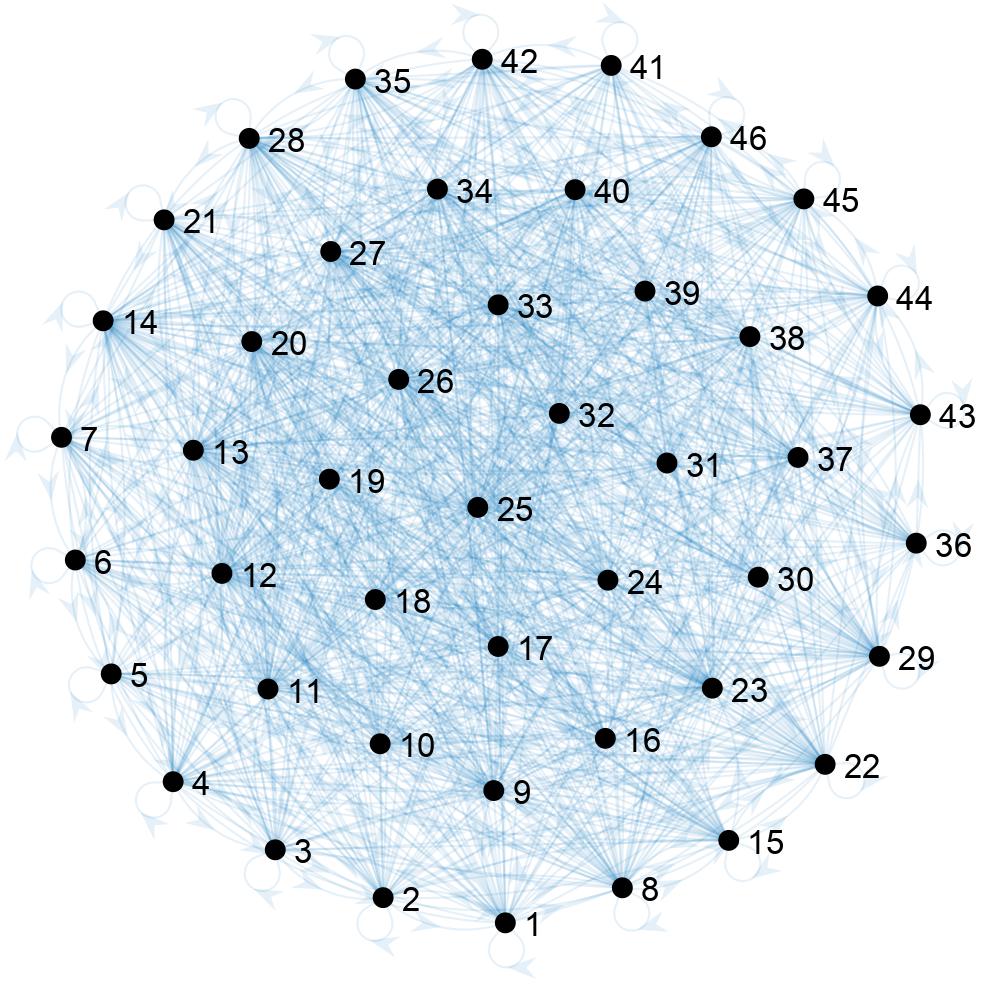}}
	\subfloat[\label{fig2b}]{
		\includegraphics[scale=0.58]{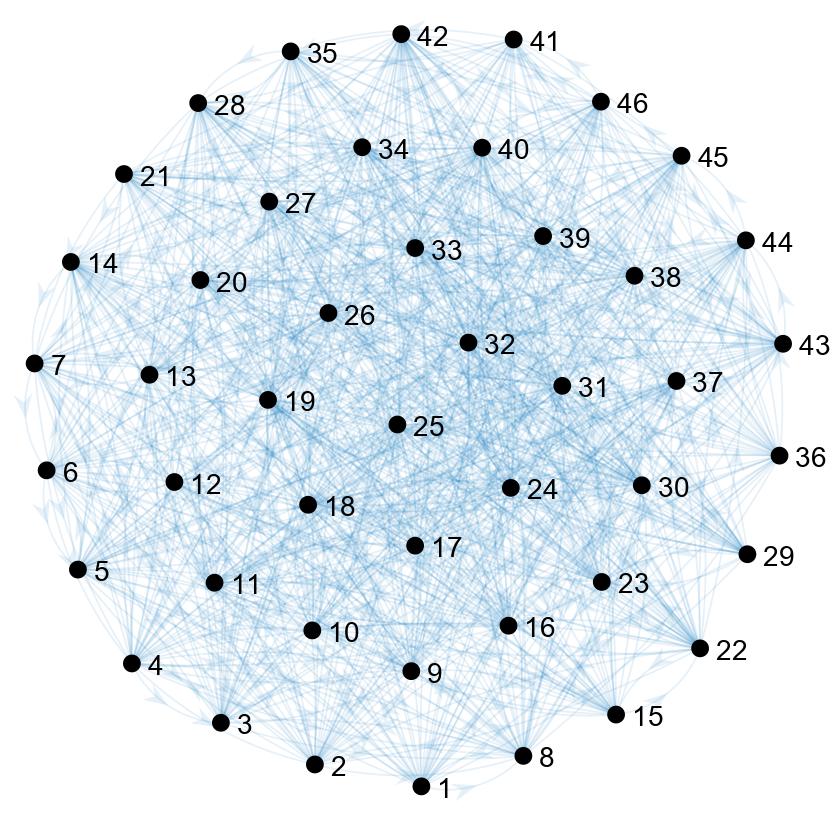} }
	\caption{Network structures. (a) Physical network for epidemic spreading. (b) Social network for opinion evolution.}\label{fig2}
\end{figure}

First, we consider a mild epidemic with $B=0.15 \bar{B}, B_{\min }=0.1 \bar{B}, \Delta=\bar{\Delta}$, and $\Delta_{\min }=$ $0.8 \bar{\Delta}$, resulting in $R_{\min }^z=0.9091$ and $R_{\max }^z=0.9927$. The development of the epidemic and public opinion is shown in Fig.~\ref{fig3}. Consistent with Theorem \ref{healthy-consensus equilibrium}, when $R_{\max }^z \leq 1$, i.e., in the case of a mild epidemic, the infection proportion in each prefecture converges to a disease-free healthy state, and the public quickly reaches a consensus, acknowledging that the epidemic is not severe. This type of infectious disease will spontaneously disappear, both physically and mentally, as described in Algorithm \ref{alg:1}, without the need for additional interventions.

\begin{figure} [t!]
	\centering
	\subfloat[\label{3a}]{
		\includegraphics[scale=0.25]{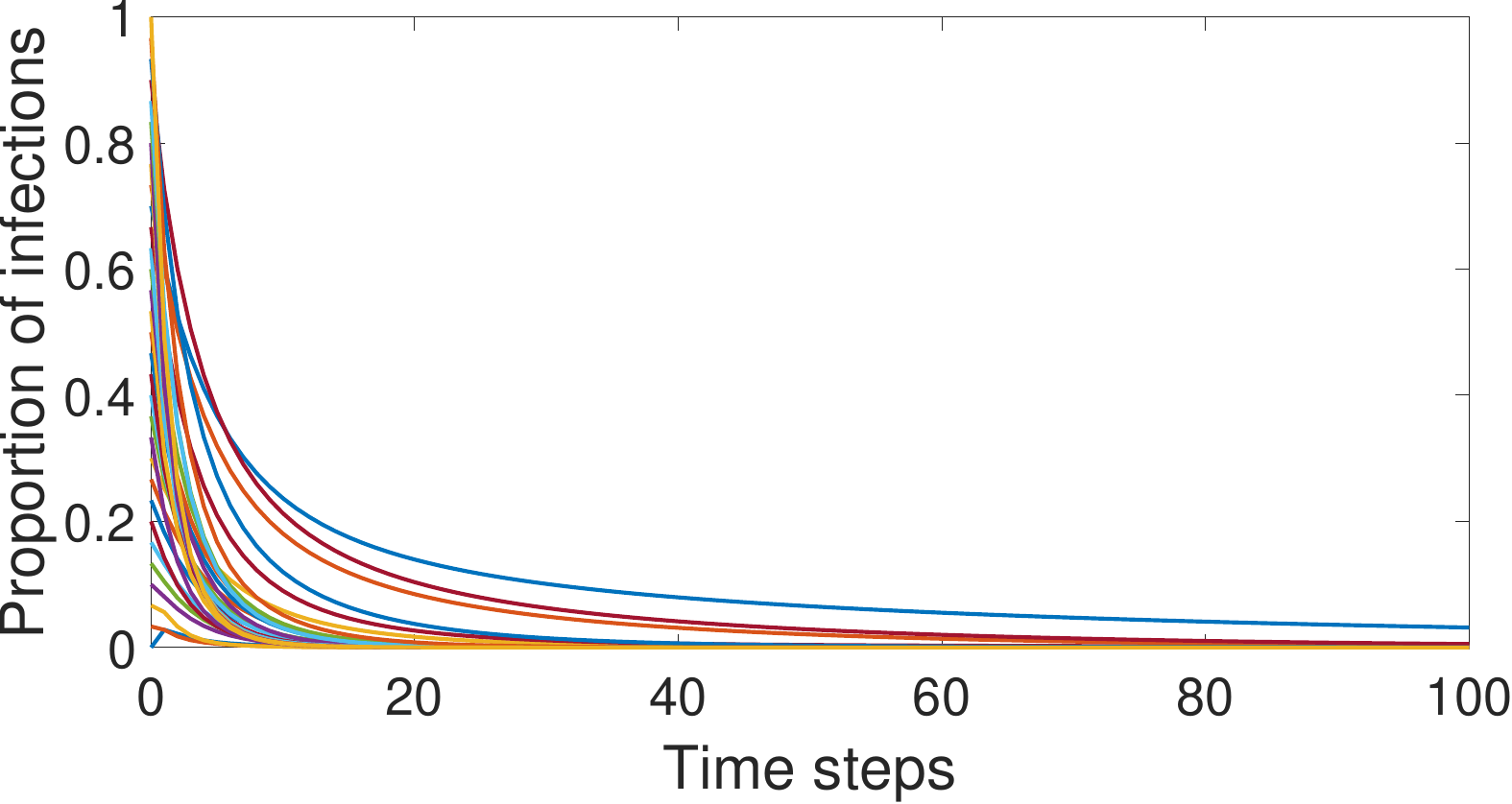}}
	\\
	\subfloat[\label{3b}]{
		\includegraphics[scale=0.25]{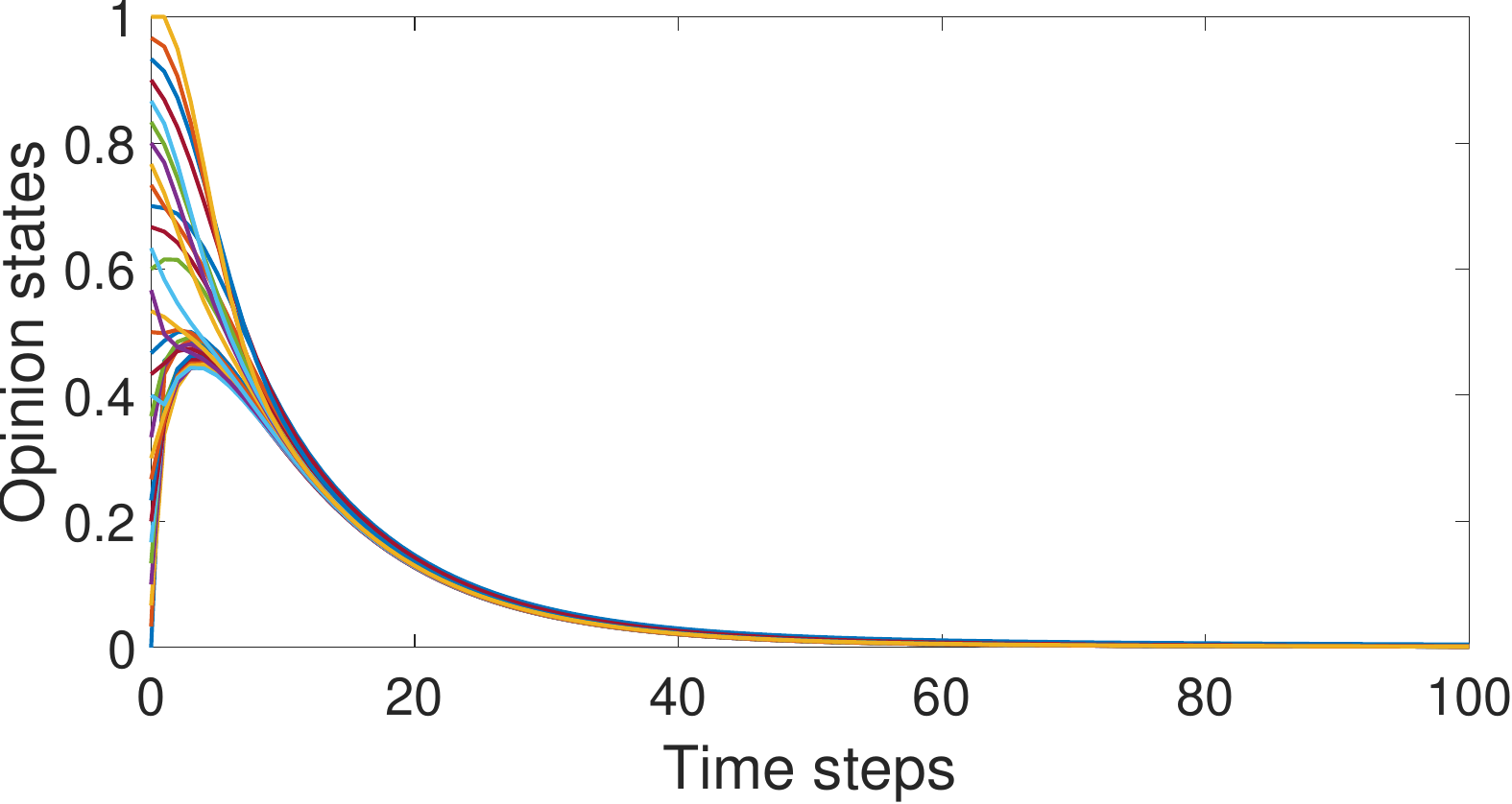} }
	\caption{For a mild epidemic with $R_{\min }^z=0.9091$ and $R_{\max }^z=0.9927$, evolution of the coupled system for the $46$-prefecture network in Fig.~\ref{fig2}: (a) epidemic states converge to a healthy equilibrium, and (b) opinion states reach consensus and converge to zero.}
	\label{fig3} 
\end{figure}

\begin{figure*} [t!]
	\centering
	\subfloat[\label{4a}]{
		\includegraphics[scale=0.25]{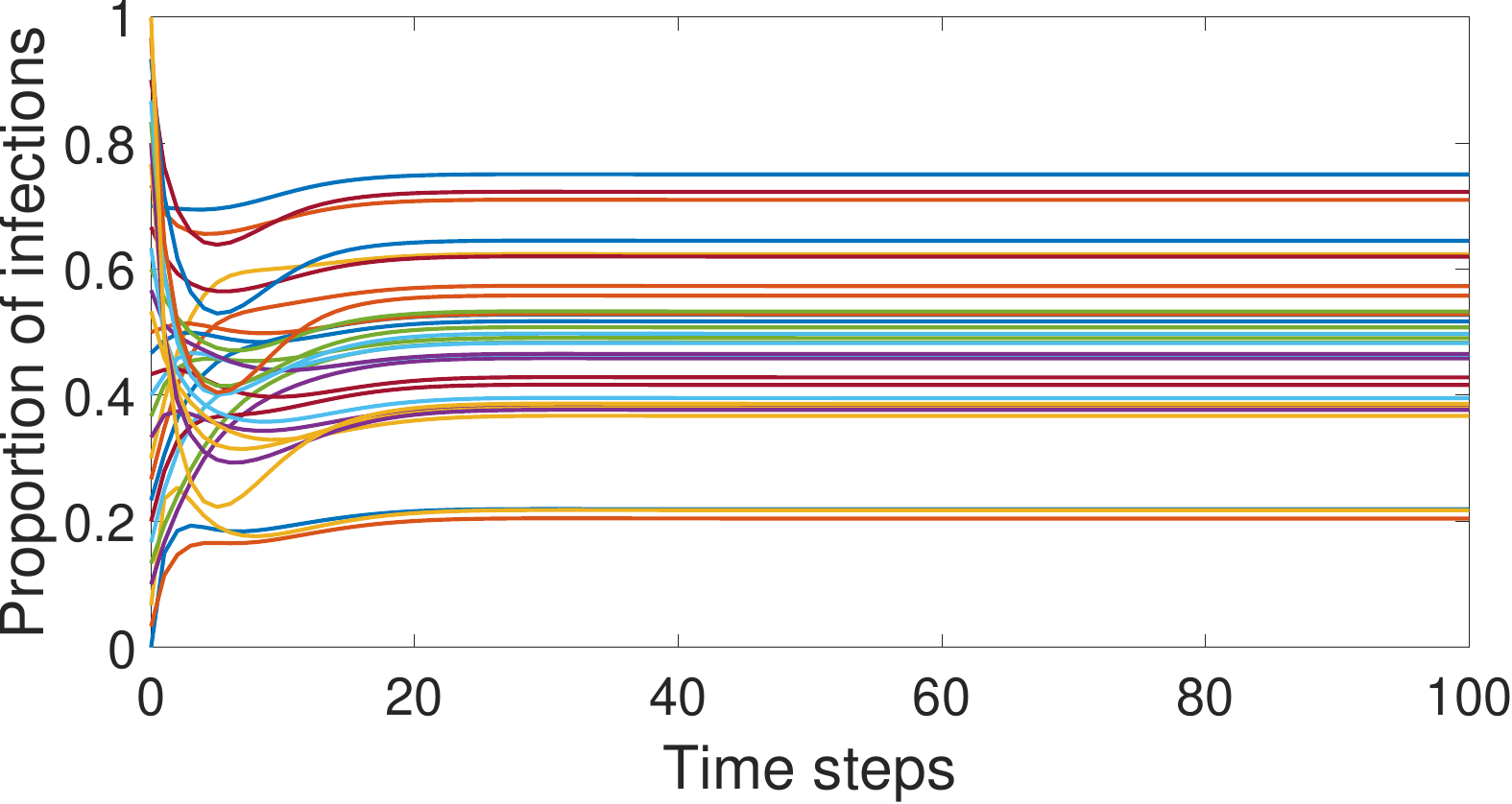}}
	\hspace{1cm}
	\subfloat[\label{4b}]{
		\includegraphics[scale=0.25]{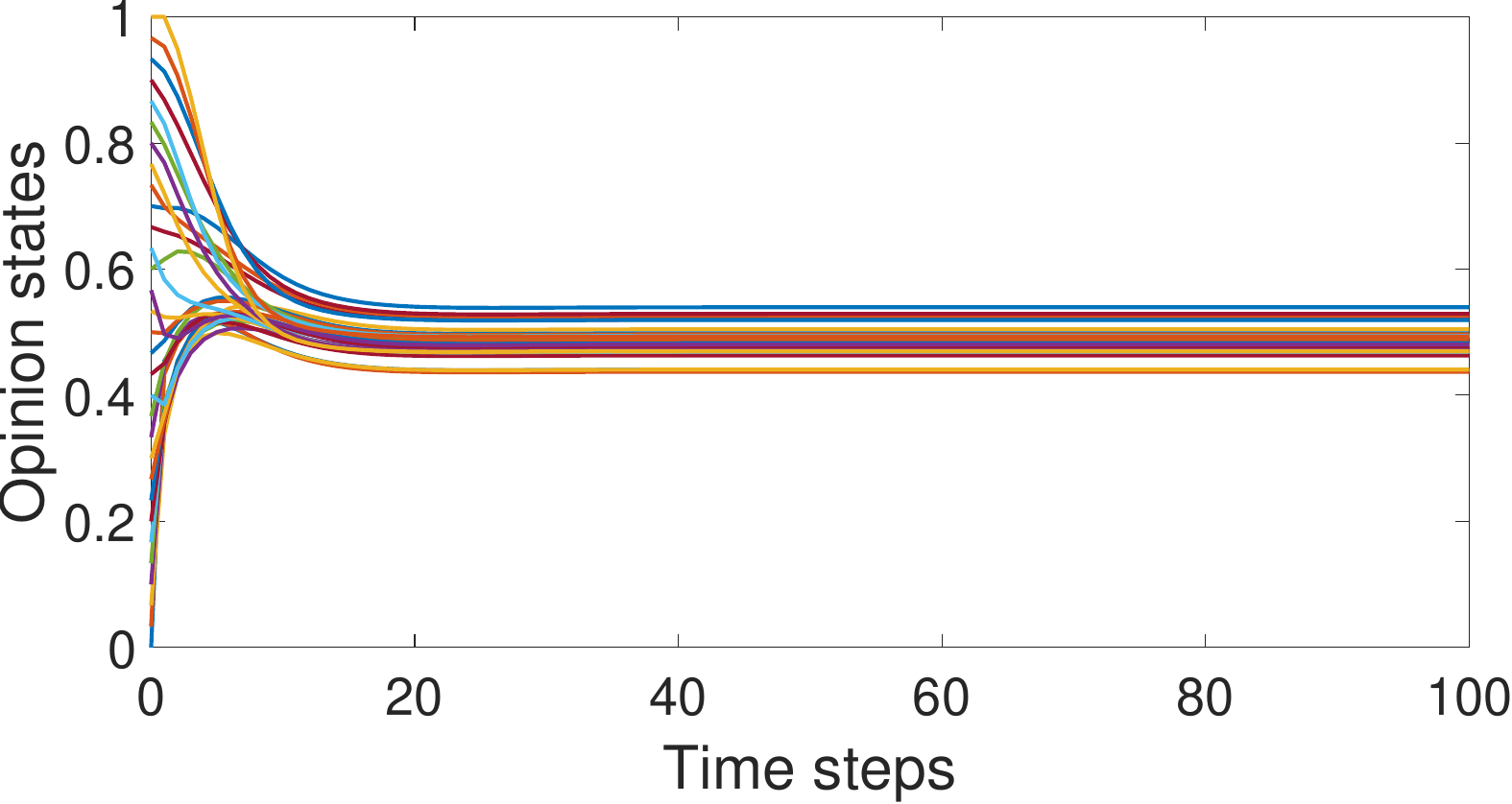} }
	\\
	\subfloat[\label{4c}]{
		\includegraphics[scale=0.25]{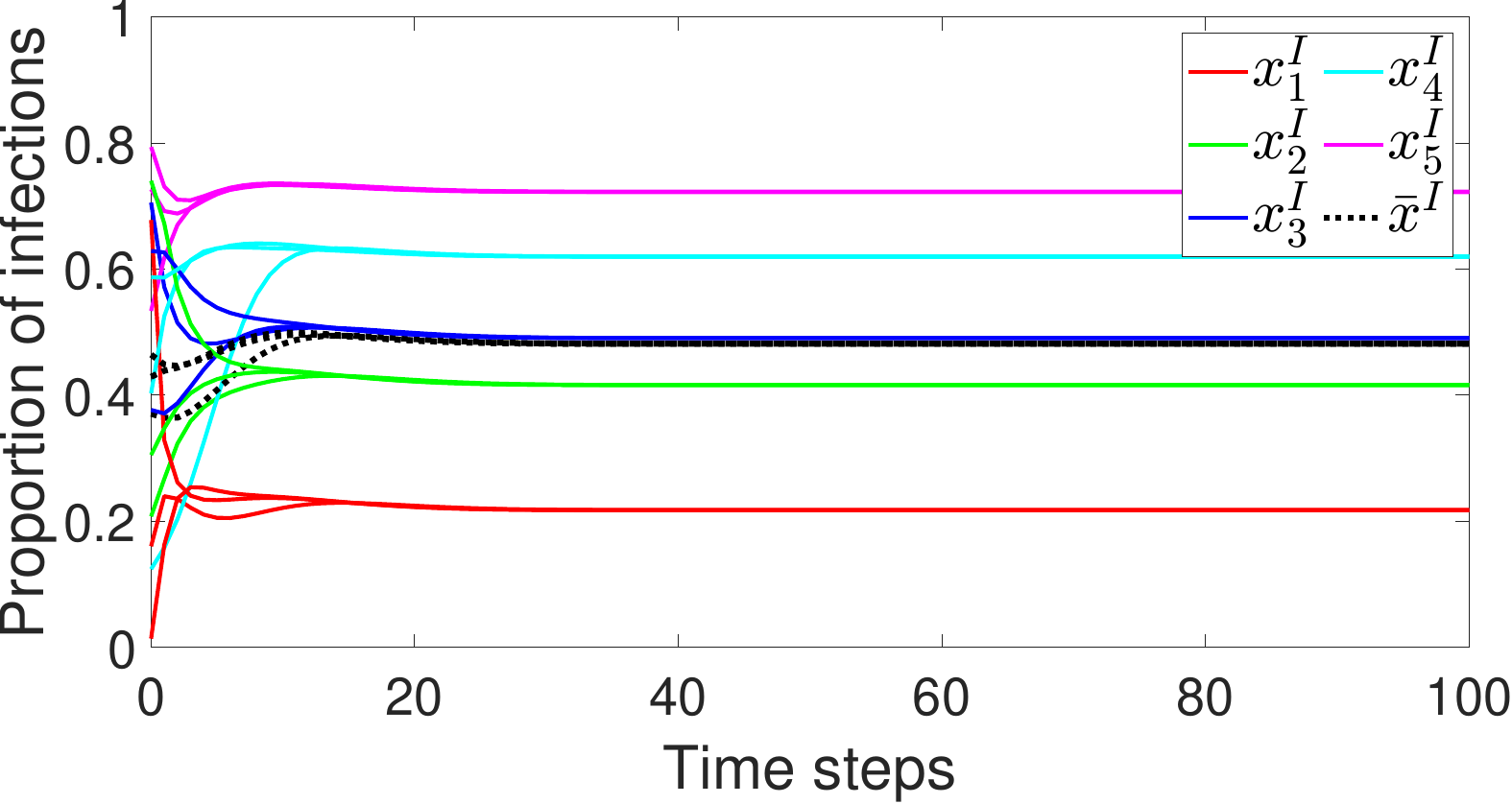}}
	\hspace{1cm}
	\subfloat[\label{4d}]{
		\includegraphics[scale=0.25]{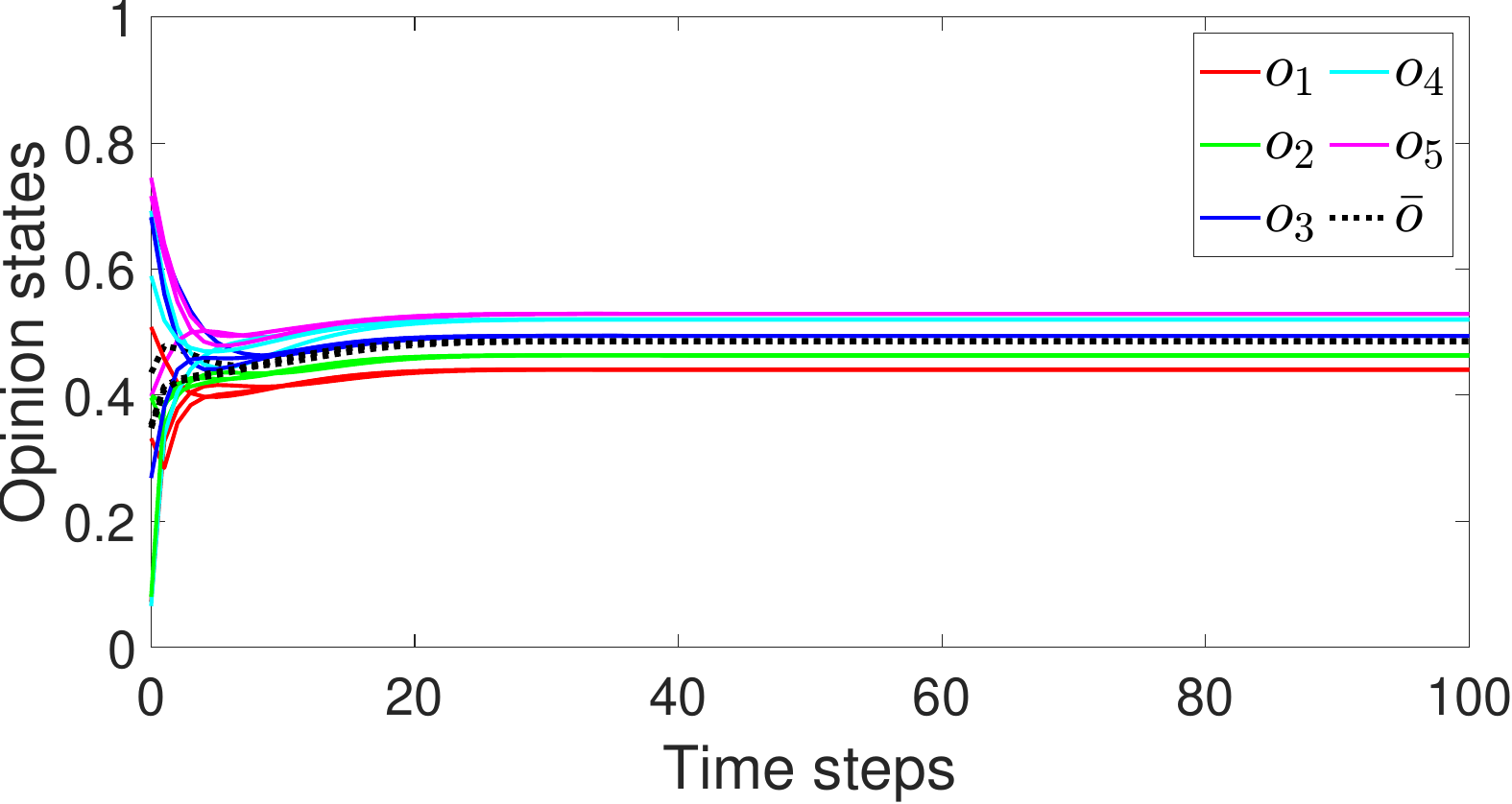} }
	\caption{For a severe epidemic with $R_{\min }^z=1.190$ and $R_{\max }^z=1.659$, evolution of the coupled system for the $46$-prefecture network in Fig.~\ref{fig2}: (a) epidemic states converge to an endemic equilibrium, (b) opinion states reach dissensus, (c) epidemic states converge to the same endemic equilibrium under 3 different initial conditions, and (d) opinion states reach the same dissensus under 3 different initial conditions. To maintain clarity, we only show $5$ randomly selected prefectures and the average values (thick black dotted lines) of all preferences in (c) and (d).}
	\label{fig4} 
\end{figure*}

Next, we consider a severe epidemic with $B=0.8 \bar{B}, B_{\min }=0.4 \bar{B}, \Delta=\bar{\Delta}$, and $\Delta_{\min }=0.5 \bar{\Delta}$, resulting in $R_{\min }^z=1.190$ and $R_{\max }^z=1.659$. As shown in Figs.~\ref{4a} and \ref{4b}, when $R_{\min }^z>1$, the coupled system (\ref{coupled}) has an endemic equilibrium $\left(x^*, z^*\right)$, confirming Proposition \ref{endemic-existence}. Moreover, we can verify that $x^*$ satisfies condition (\ref{beta x*}), and thus, Theorem~\ref{endemic-stability} indicates that $x^*$ should be globally asymptotically stable except at zero. Substituting $z^*$ into condition (\ref{Lz*}), we find that $z^*$ also meets the global asymptotic stability condition. The simulation in Figs.~\ref{4c} and \ref{4d} confirms these points: Regardless of the initial condition at the outbreak of the epidemic, the entire network ultimately converges to the same endemic equilibrium for this severe case. Therefore, as described in Algorithm \ref{alg:1}, solving (\ref{coupled}) in advance or obtaining $\left(x^*, z^*\right)$ through numerical methods allows for targeted allocation of medical resources and public health preparedness.

Finally, we consider a moderate epidemic with $B=0.2 \bar{B}, B_{\min }=0.1 \bar{B}, \Delta=\bar{\Delta}$, and $\Delta_{\min }=0.7 \bar{\Delta}$, resulting in $R_{\min }^z=0.9091$ and $R_{\max }^z=1.058$. Figs.~\ref{5a} and \ref{5b} show that compared to a severe epidemic, the system converges to an endemic equilibrium with a generally lower infection proportion. However, at the same time, the public's overall opinion toward the infectious disease becomes more optimistic, which is the key reason the epidemic cannot spontaneously disappear. At this point, we apply control strategies to the system, assuming that the government of each prefecture has sufficient administrative and promotional resources to raise the public opinion about the epidemic to $0.6$. In this case, $\underline{z}^u(Q)=0.6$ and the corresponding reproduction number $R_{\max }^{z^u(Q)}=0.9686<1$. With this strategy, Fig.~\ref{5c} shows that all prefectures successfully suppress the epidemic, achieving a healthy state. This demonstrates the results of Corollaries~\ref{cor:1} and \ref{cor:2} and Algorithm \ref{alg:1}, where government intervention in opinion dynamics can effectively control the epidemic with minimal impact on the socio-economic environment.

\begin{figure*} [t!]
	\centering
	\subfloat[\label{5a}]{
		\includegraphics[scale=0.25]{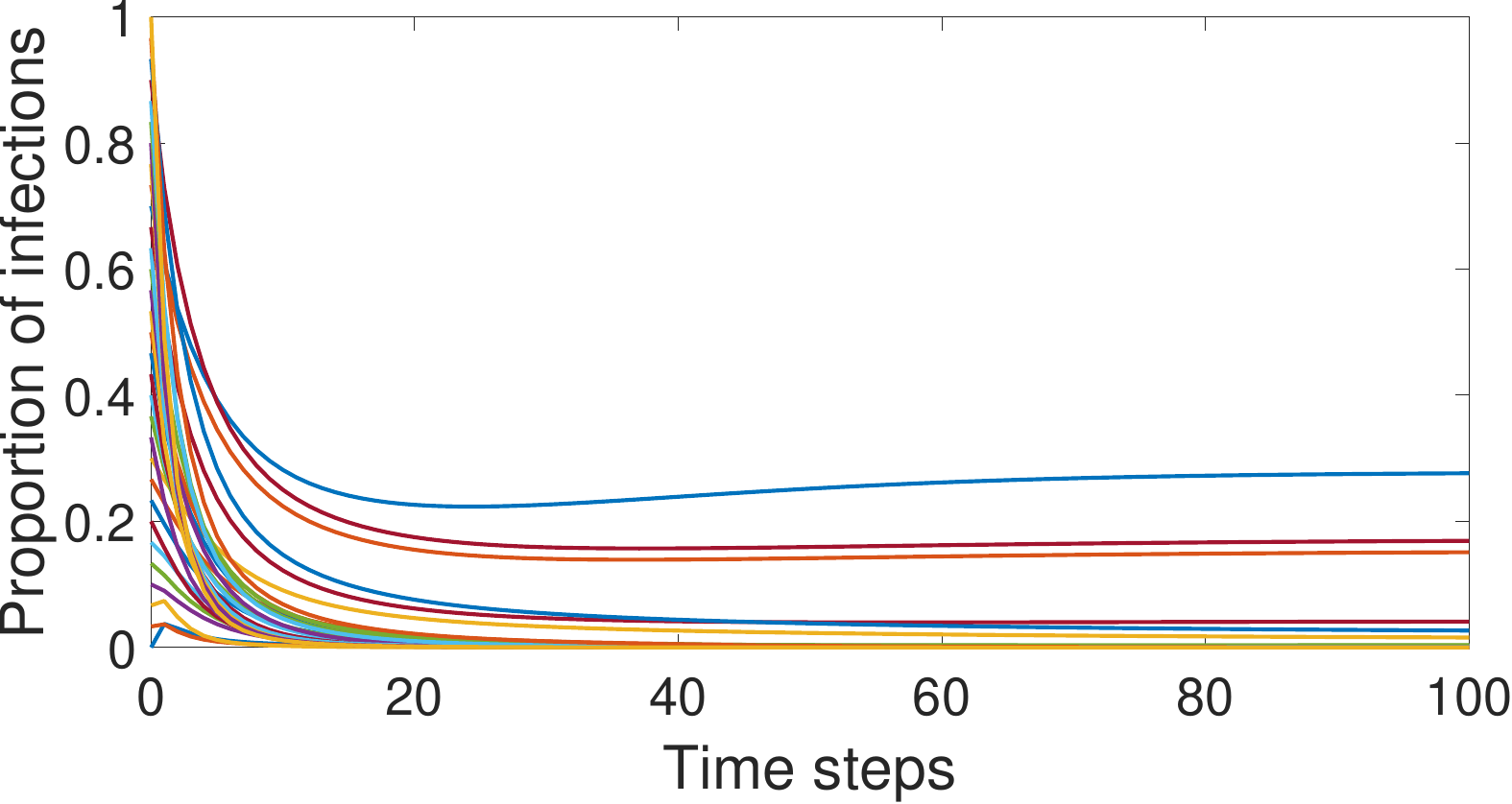}}
	\hspace{1cm}
	\subfloat[\label{5b}]{
		\includegraphics[scale=0.25]{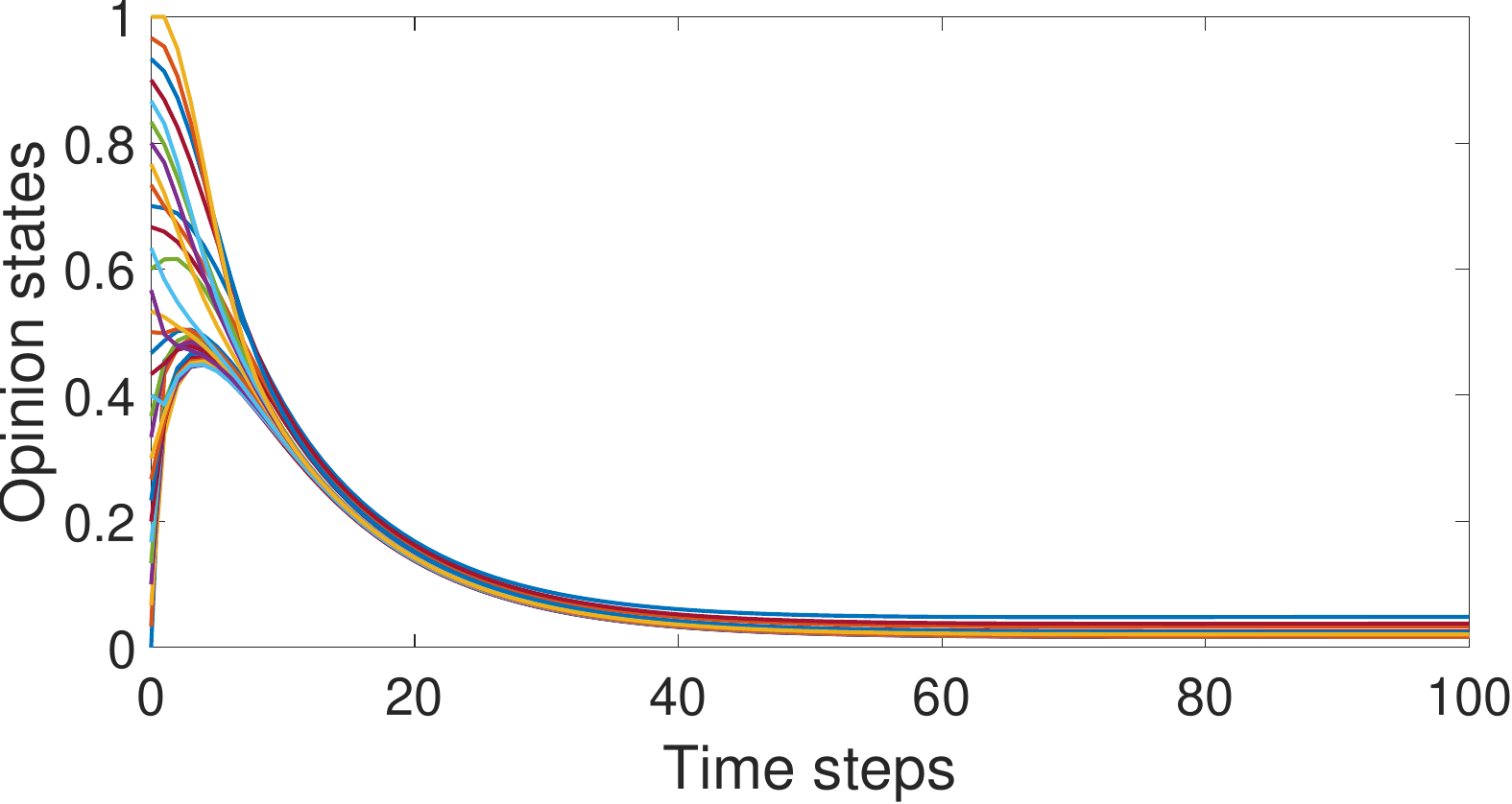} }
	\\
	\subfloat[\label{5c}]{
		\includegraphics[scale=0.25]{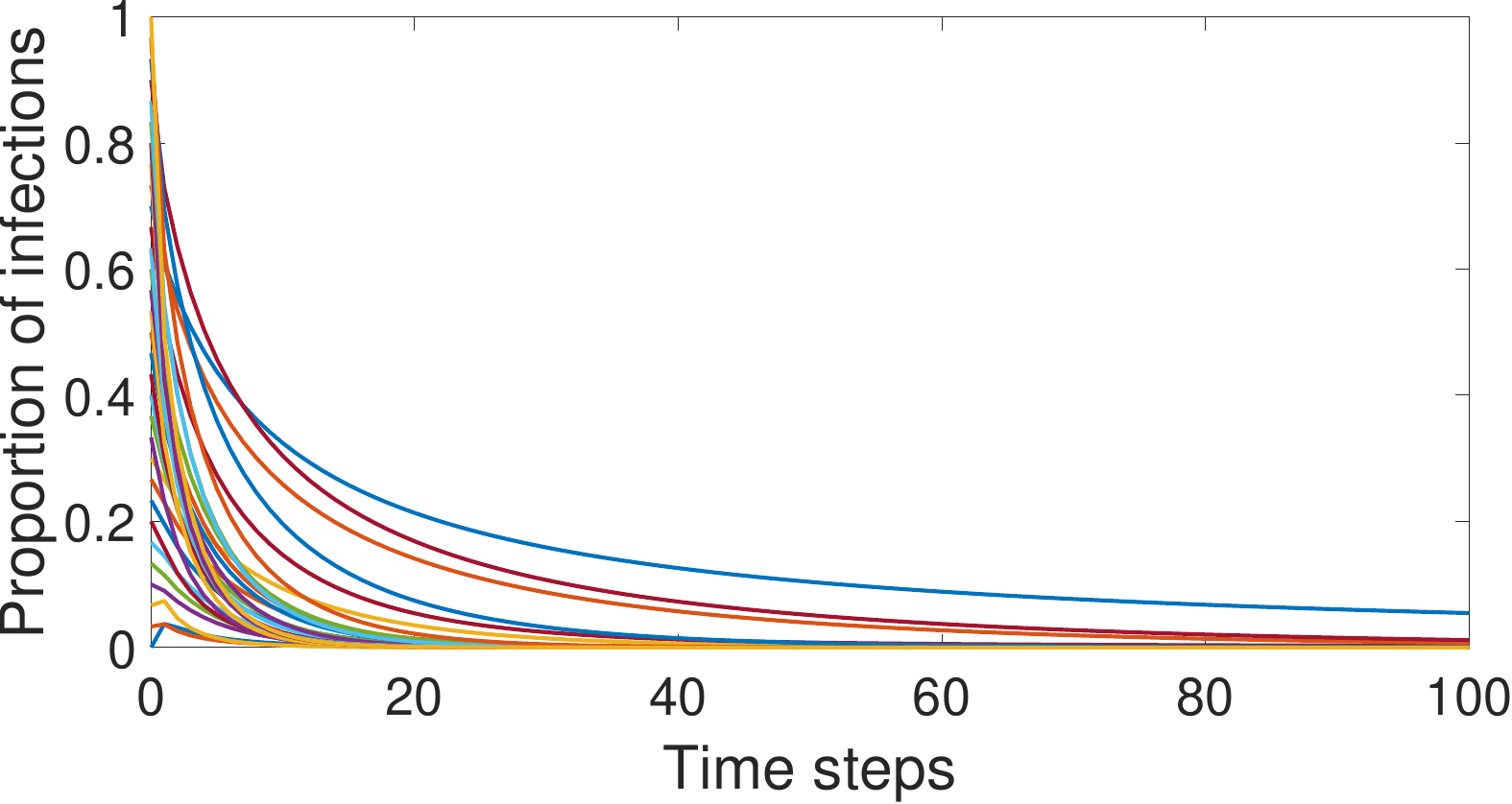}}
	\hspace{1cm}
	\subfloat[\label{5d}]{
		\includegraphics[scale=0.25]{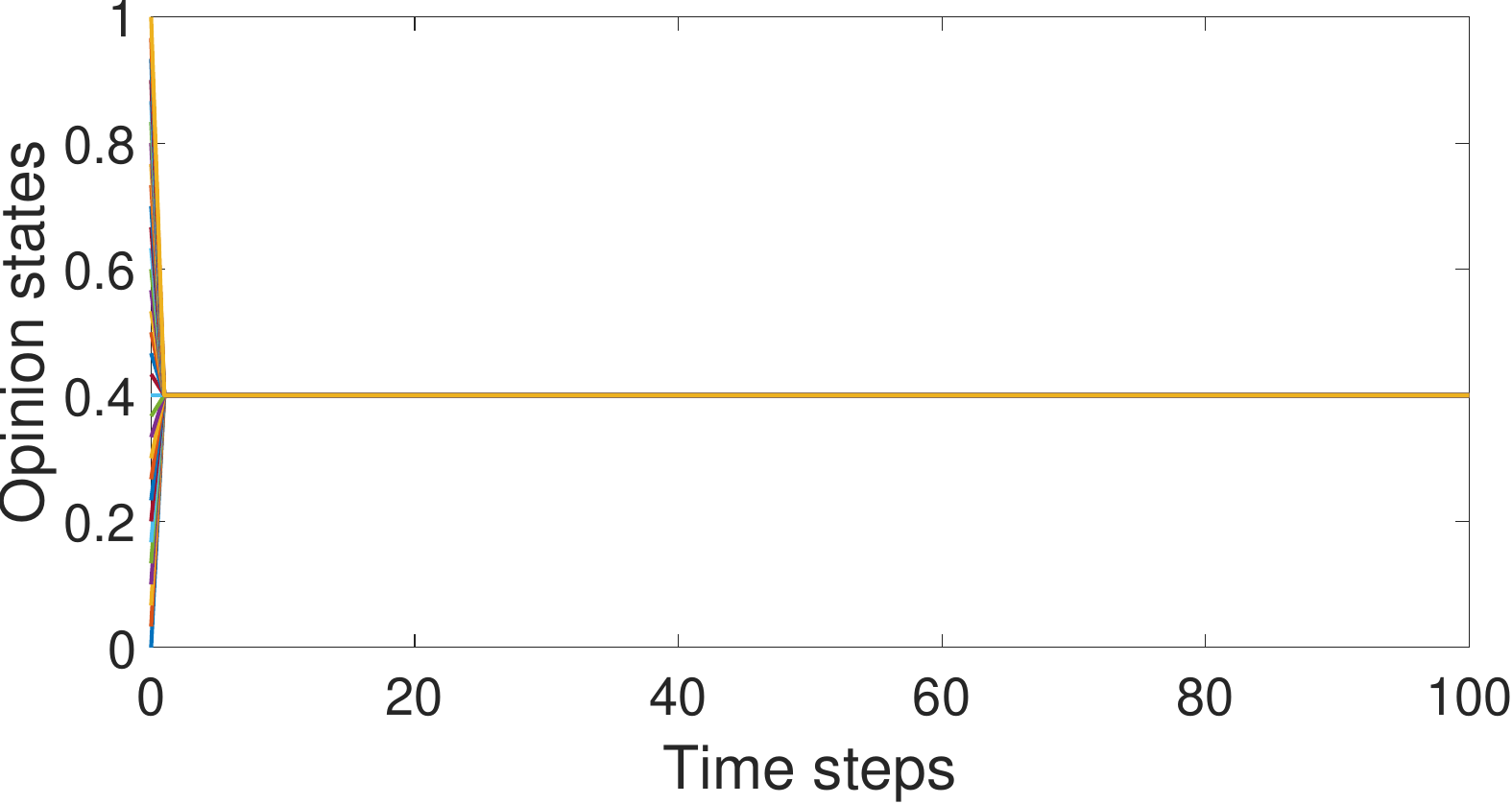} }
	\caption{For a moderate epidemic with $R_{\min }^z=0.9091$ and $R_{\max }^z=1.058$, evolution of the coupled system for the $46$-prefecture network in Fig.~\ref{fig2}: (a) epidemic states converge to an endemic equilibrium without intervention, (b) opinion states without intervention, (c) epidemic states converge to a healthy equilibrium with intervention, and (d) opinion states with intervention.}
	\label{fig5} 
\end{figure*}

\section{Conclusion} \label{Sec:Conclusion}
In this paper, we have considered the mutual influence between epidemic spread and public opinion evolution on a large-scale network. We have constructed a coupled epidemic-opinion model and, by defining an SIS-opinion reproduction number, obtained the necessary and sufficient conditions for the system to converge to a healthy equilibrium, as well as sufficient conditions for the existence and global stability of endemic equilibria. Our work has further discussed the role of public opinions in responding to large-scale epidemics and has proposed an algorithm to help policymakers devise strategies for dealing with epidemics of varying intensities and different available resources. A simulation example on a real-world network has supported our theoretical results and provided insights into epidemic control perspectives. 

In the future, we hope to extend our work to include more complex epidemic models, such as capturing the incubation period by an extra exposed state \cite{tomovski2021discrete} and considering the impact of non-biological environmental factors on epidemic spread through an infrastructure network layer \cite{pare2022multilayer}.

\bibliographystyle{IEEEtran}
\bibliography{root} 

\begin{IEEEbiography}[{\includegraphics[width=1in,height=1.25in,clip,keepaspectratio]{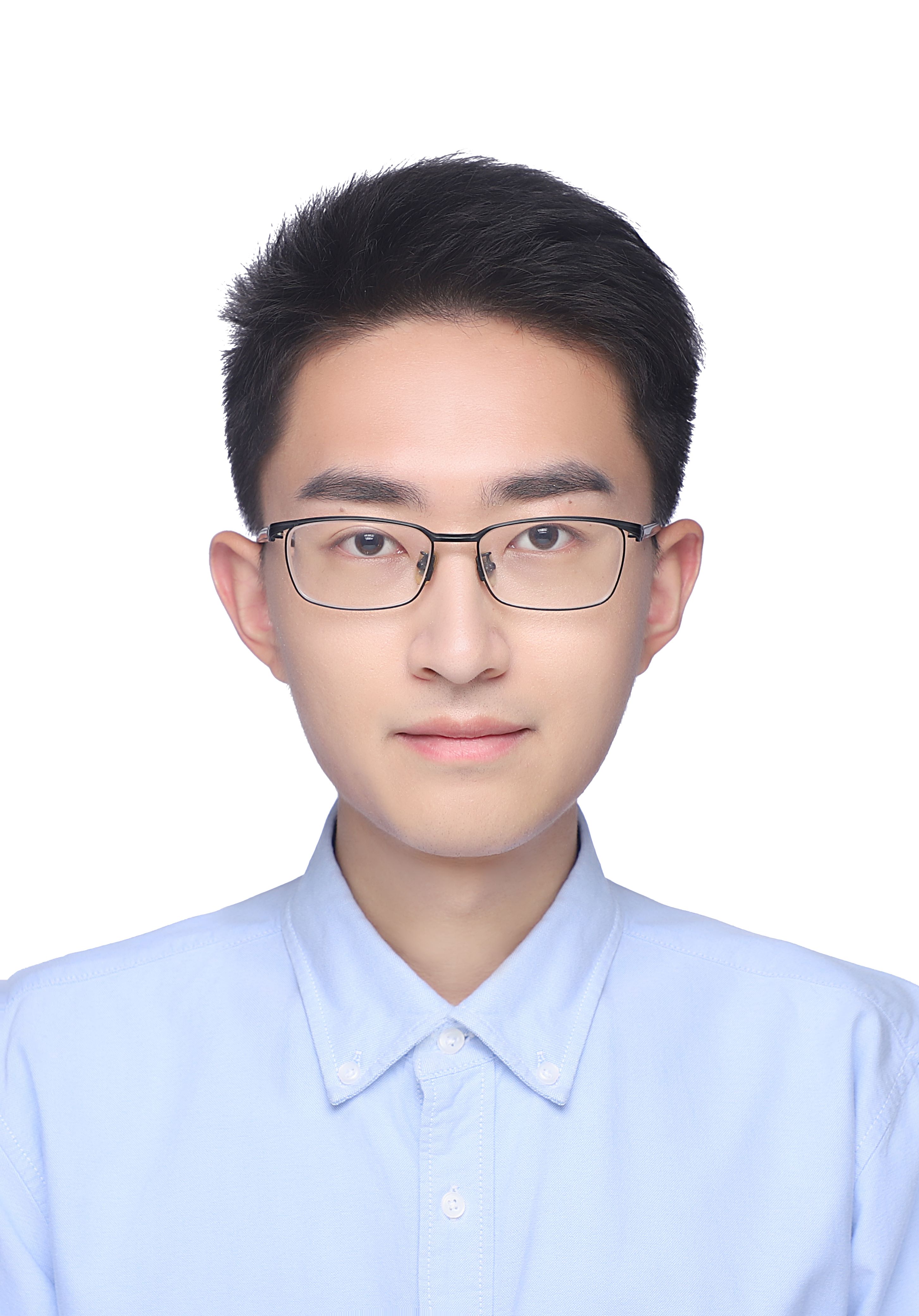}}]{Qiulin Xu}
	received the B.Eng. degree in automation from the School of Artificial Intelligence and Automation, Huazhong University of Science and Technology, Wuhan, China, in 2019, and the M.Eng. degree in control engineering from the Department of Automation, University of Science and Technology of China, Hefei, China, in 2022. He is currently working toward the Ph.D degree in artificial intelligence with the Institute of Science Tokyo (formerly, Tokyo
	Institute of Technology, until September 2024), Yokohama, Japan. His current research interests include cyber-physical systems security, event-based state estimation, epidemic modeling and control, and social networks.
\end{IEEEbiography}

\begin{IEEEbiography}[{\includegraphics[width=1in,height=1.25in,clip,keepaspectratio]{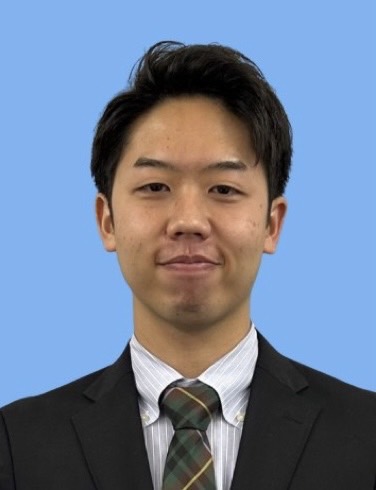}}]{Tatsuya Masada}
	received the B.Eng. degree in Computer Science from the Tokyo Institute of  Technology, Tokyo, Japan, in 2022, and the M.Eng. degree in Computer Science from the Tokyo Institute of  Technology, Tokyo, Japan, in 2024. He is currently working on information security at an IT company. His research interests include epidemic networks.
\end{IEEEbiography}


\begin{IEEEbiography}[{\includegraphics[width=1in,height=1.25in,clip,keepaspectratio]{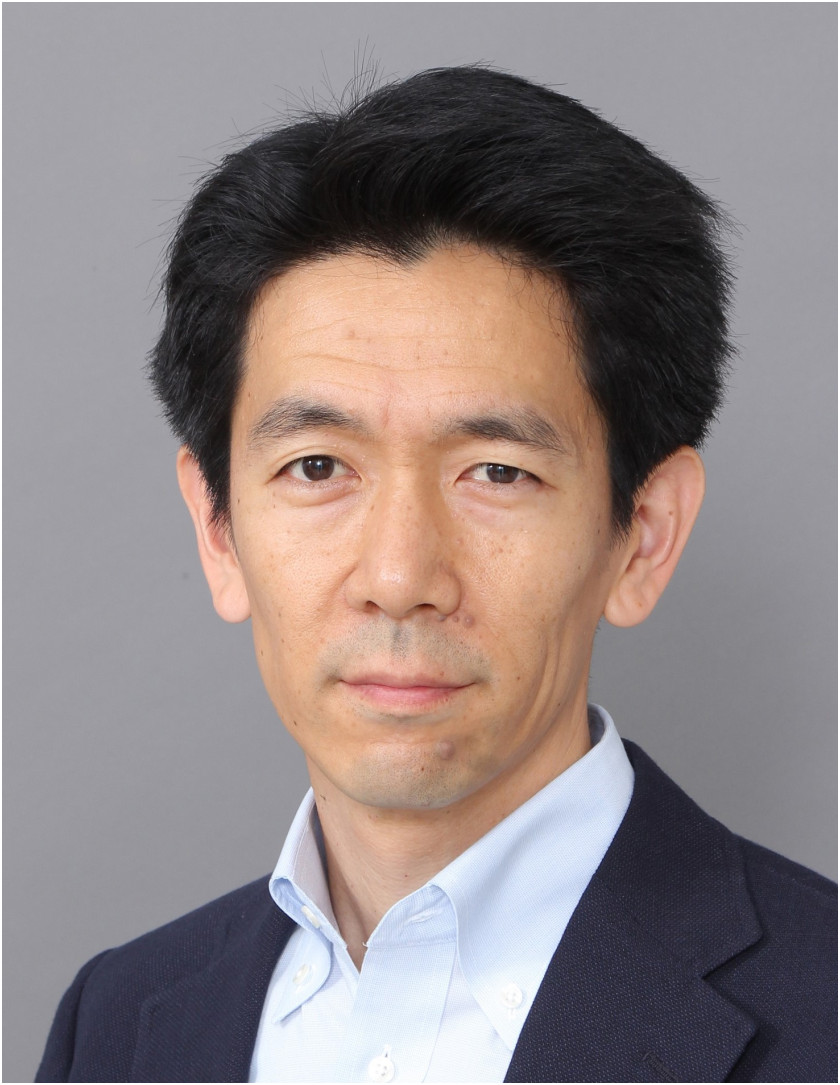}}]{Hideaki Ishii}
	(M'02-SM'12-F'21) received the
	M.Eng.\ degree from Kyoto University in 1998, 
	and the Ph.D.\ degree from the University of Toronto in 2002. He was a Postdoctoral Research
	Associate at the University of Illinois at Urbana-Champaign in 2001--2004, and a Research Associate at The University of Tokyo in 2004--2007.
	He was an Associate Professor and then a Professor 
	at the Department of Computer Science,
	Tokyo Institute of Technology in 2007--2024.
	Currently, he is a Professor at the Department of
	Information Physics and Computing at The University of Tokyo since 2024.
	He was a Humboldt Research Fellow at the University of Stuttgart
	in 2014--2015. He has also held visiting positions at CNR-IEIIT at
	the Politecnico di Torino, the Technical University of Berlin, and
	the City University of Hong Kong. His research interests
	include networked control systems, multiagent systems, distributed algorithms,
	and cyber-security of control systems.
	
	Dr.~Ishii has served as an Associate Editor for Automatica, 
	the IEEE Control Systems Letters, the IEEE Transactions on Automatic Control, 
	the IEEE Transactions on Control of Network Systems,
	and the Mathematics of Control, Signals, and Systems.
	He was a Vice President for the IEEE Control Systems Society (CSS) in 2022--2023 
	and an elected member of the IEEE CSS Board of Governors in 2014--2016. 
	He was the Chair of the IFAC Coordinating Committee on Systems and
	Signals in 2017--2023 and the Chair of the IFAC Technical Committee
	on Networked Systems for 2011--2017. 
	He served as the IPC Chair for the IFAC World Congress 2023 held in Yokohama, Japan.
	He received the IEEE Control Systems Magazine Outstanding Paper
	Award in 2015. Dr.~Ishii is an IEEE Fellow.
\end{IEEEbiography}

\end{document}